\documentclass[11pt]{article}
\usepackage{fullpage}
  
\usepackage{graphicx}  
\usepackage{subfigure} 
\usepackage{color}
\usepackage{amsmath, amsthm, amsfonts}
\usepackage{algpseudocode}
\usepackage{algorithm}
\usepackage{hyperref}
\usepackage{mathtools}
\usepackage{bm}
\usepackage{nicefrac}

\newtheorem{theorem}{Theorem}[section]
\newtheorem{lemma}[theorem]{Lemma}
\newtheorem{corollary}[theorem]{Corollary}

\theoremstyle{definition}

\newtheorem{problem}[theorem]{Problem}
\newtheorem{conjecture}[theorem]{Conjecture}

\usepackage[most]{tcolorbox}
\usepackage{empheq}
  
\newtcbox{\mymath}[1][]{%
    nobeforeafter, math upper, tcbox raise base,
    enhanced, colframe=blue!30!black,
    colback=blue!30, boxrule=1pt,
    #1}
  
\usepackage{color}
\newcommand{\field}[1]{\mathbb{#1}}
\newcommand{\hide}[1]{}

\newcommand{\squishlist}{
 \begin{list}{$\bullet$}
  {  \setlength{\itemsep}{0pt}
     \setlength{\parsep}{3pt}
     \setlength{\topsep}{3pt}
     \setlength{\partopsep}{0pt}
     \setlength{\leftmargin}{2em}
     \setlength{\labelwidth}{1.5em}
     \setlength{\labelsep}{0.5em}
} }
\newcommand{\squishlisttight}{
 \begin{list}{$\bullet$}
  { \setlength{\itemsep}{0pt}
    \setlength{\parsep}{0pt}
    \setlength{\topsep}{0pt}
    \setlength{\partopsep}{0pt}
    \setlength{\leftmargin}{2em}
    \setlength{\labelwidth}{1.5em}
    \setlength{\labelsep}{0.5em}
} }

\newcommand{\squishdesc}{
 \begin{list}{}
  {  \setlength{\itemsep}{0pt}
     \setlength{\parsep}{3pt}
     \setlength{\topsep}{3pt}
     \setlength{\partopsep}{0pt}
     \setlength{\leftmargin}{1em}
     \setlength{\labelwidth}{1.5em}
     \setlength{\labelsep}{0.5em}
} }

\newcommand{\squishend}{
  \end{list}
}

\newcommand{\squishlistt}{
 \begin{list}{---}
  {  \setlength{\itemsep}{0pt}
     \setlength{\parsep}{3pt}
     \setlength{\topsep}{3pt}
     \setlength{\partopsep}{0pt}
     \setlength{\leftmargin}{2em}
     \setlength{\labelwidth}{1.5em}
     \setlength{\labelsep}{0.5em}
} }

\def\defeq{\stackrel{\mathrm{def}}{=}}

\newcommand{\spara}[1]{\smallskip\noindent{\bf #1}}

\AtBeginDocument{%
  \providecommand\BibTeX{{%
    \normalfont B\kern-0.5em{\scshape i\kern-0.25em b}\kern-0.8em\TeX}}}

\tikzset{mynode/.style={inner sep=1pt,fill,outer sep=0,circle}}
\tikzset{font=\small}






\DeclarePairedDelimiterX{\gnorm}[2]{\lVert}{\rVert_{#2}}{#1}

\def\eps{\varepsilon}

\def\PP{\mathcal{P}}

\def\Bbf{\mathbf{B}}

\def\bbf{\mathbf{b}}
\def\fbf{\mathbf{f}}

\def\wbf{\mathbf{w}}\def\xbf{\mathbf{x}}
\def\ybf{\mathbf{y}}

\def\Rbb{\mathbb{R}}

\newcommand{\one}{\mathbf{1}}
\newcommand{\zero}{\mathbf{0}}

\begin{document}

\title{Flowless: Extracting Densest Subgraphs Without Flow Computations}
\author{
	Digvijay Boob\\
	Georgia Tech\\
	\texttt{digvijaybb40@gatech.edu}
	\and
	Yu Gao\\
	Georgia Tech\\
	\texttt{ygao380@gatech.edu}
	\and
	Richard Peng\\
	Georgia Tech\\
	\texttt{rpeng@cc.gatech.edu}
	\and
	Saurabh Sawlani\\
	Georgia Tech\\
	\texttt{sawlani@gatech.edu}
	\and
	Charalampos E. Tsourakakis\\
	Boston University\\
	\texttt{ctsourak@bu.edu}
	\and
	Di Wang\thanks{Work done when author was at Georgia Tech.}\\
	Google AI\\
	\texttt{wadi@google.com}
	\and
	Junxing Wang\\
	CMU\\
	\texttt{junxingw@cs.cmu.edu}
}

\maketitle
 
\begin{abstract}
We propose a simple and computationally efficient method for dense subgraph discovery in graph-structured data, which is a classic problem both in theory and in practice. It is well known that dense subgraphs can have strong correlation with structures of interest in real-world networks across various domains such as
social networks, communication systems, financial markets,
and biological systems \cite{gionis2015dense}. Consequently, this problem arises broadly in modern data science applications, and it is of great interest to design algorithms with practical appeal.

For the densest subgraph problem, which asks to find a subgraph with maximum average degree, Charikar's greedy algorithm \cite{asahiro2000greedily,charikar2000greedy} is guaranteed to find a $2$-approximate optimal solution. Charikar's algorithm is very simple, and can typically find result of quality much better than the provable factor $2$-approximation, which makes it very popular in practice. However, it is also known to give suboptimal output in many real-world examples. On the other hand, finding the exact optimal solution requires the computation of maximum flow \cite{goldberg1984finding,gallo1989fast,khuller2009finding}. Despite the existence of highly optimized maximum flow solvers, such computation still incurs prohibitive computational costs for the massive graphs arising in modern data science applications.

We devise a simple iterative algorithm which naturally generalizes the greedy algorithm of Charikar. Moreover, although our algorithm is fully combinatorial, it draws insights from the iterative approaches from convex optimization, and also exploits the dual interpretation of the densest subgraph problem. We have empirical evidence that our algorithm is much more robust against the structural heterogeneities in real-world datasets, and converges to the optimal subgraph density even when the simple greedy algorithm fails. On the other hand, in instances where Charikar's algorithm performs well, our algorithm is able to quickly verify its optimality. Furthermore, we demonstrate that our method is significantly faster than the maximum flow based exact optimal algorithm. We conduct experiments on real-world datasets from broad domains, and our algorithm achieves $\sim$145$\times$ speedup on average to find subgraphs whose density is at least 90\% of the optimal value.
\end{abstract}

\section{Introduction}  
\label{sec:intro} 
Finding dense components in graphs is a major topic in graph mining with diverse applications including DNA motif detection, unsupervised detection of interesting stories from micro-blogging streams in real time, indexing graphs 
for  efficient distance query computation, and anomaly detection in financial networks, and social networks \cite{gionis2015dense}. The {\em densest subgraph problem} (DSP) is one of the major formulations for dense subgraph discovery, where, given an undirected weighted graph $G(V,E,w)$  we want to find a set of nodes $S\subseteq V$ that maximizes the \emph{degree density} $\nicefrac{w(S)}{|S|}$, where $w(S)$ is the sum of the weights of the edges in the graph induced by $S$.  When the weights are non-negative, the problem is solvable in polynomial time using maximum flows \cite{goldberg1984finding}. Since maximum flow computations are expensive despite the theoretical progress achieved over the recent years, Charikar's greedy peeling algorithm is frequently used in practice \cite{charikar2000greedy}. This algorithm iteratively peels the lowest degree node from the graph, thus producing a sequence of  subsets of nodes, of which it outputs the densest one. This simple, linear time and linear space algorithm provides a $\nicefrac{1}{2}$-approximation for the DSP. However, when the edge weights are allowed to be negative, the DSP becomes NP-hard \cite{tsourakakis2019novel}.

Our work was originally motivated by a natural question: 
How can we quickly assess whether the output of Charikar's algorithm on a given graph instance is closer to optimality or to the worst case $\frac{1}{2}$-approximation guarantee?  However, we ended up answering the following intriguing question that we state as the next problem:

\begin{tcolorbox}
\begin{problem}
\label{prob1} 
Can we design an algorithm that  performs (i)  as well as Charikar's greedy algorithm in terms of efficiency, and (ii) as well as the maximum flow-based exact algorithm in terms of output quality?
\end{problem}
\end{tcolorbox}

\spara{Contributions.} The contributions of this paper are summarized as follows:

$\bullet$ We design a novel algorithm {\sc Greedy++} for the densest subgraph problem, a major dense subgraph discovery primitive that ``lies at the heart of large-scale data mining'' \cite{bahmani2012densest}. {\sc Greedy++} combines the best of two different worlds, the accuracy of the exact maximum flow based algorithm \cite{goldberg1984finding,gallo1989fast}, and the efficiency of Charikar's greedy peeling algorithm \cite{charikar2000greedy}.  

$\bullet$ It is worth outlining that Charikar's greedy algorithm typically performs better on real-world graphs than the worse case $\nicefrac{1}{2}$-approximation; on a variety of datasets we have tried, the worst case approximation was 0.8.  Nonetheless, the only way to verify how close the output is to optimality relies on  computing the exact solution using maximum flow. Our proposed method {\sc Greedy++} can be used to assess the accuracy of Charikar's algorithm in practice. Specifically, we find empirically that for all graph instances where {\sc Greedy++} after a couple of iterations does not significantly improve the output density, the output of Charikar's algorithm is near-optimal.

$\bullet$ We implement our proposed algorithm in C++ and apply it on a variety of real-world datasets. We verify the practical value of {\sc Greedy++}.  Our empirical results indicate that {\sc Greedy++}  is a valuable addition to the toolbox of dense subgraph discovery; on real-world graphs, {\sc Greedy++} is both fast in practice, and converges to a solution with an arbitrarily small approximation factor.

\spara{Notation.} Let $G(V,E)$ be a undirected graph, where $|V|=n, |E|=m$. For a given subset of nodes $S \subset V$, $e[S]$ denotes the number of edges induced by $S$. When the graph is weighted, i.e., there exists a weight function $w:E \mapsto \field{R}^+$, and $w(S)$ denotes the sum of the weights of the edges induced by $S$.  We use $N(u)$ to define the set of neighbors of $u$, and $\deg(u)=|N(u)|$. We use $\deg_S(u)$ to denote $u$'s degree in $S$, i.e., the number of neighbors of $u$ within  the set of nodes $S$. We use $\deg_{\max}$ to denote the maximum degree in $G$.
Finally, the \emph{degree density} $\rho(S)$ of a vertex set $S \subseteq V$ is defined as  $\frac{e[S]}{|S|}$, or $\frac{w(S)}{|S|}$ when the graph is weighted.

\section{Related Work}  
\label{sec:related} 
\spara{Dense subgraph discovery.} Detecting dense components is a major problem in graph mining. It is not surprising that many different notions of a dense subgraph are used in practice. The prototypical dense subgraph is a clique. However, the maximum clique problem is not only NP-hard, but also strongly inapproximable, see \cite{hastad}.  The notion of optimal quasi-cliques has been developed to detect subgraphs that are not necessarily fully interconnected but very dense \cite{tsourakakis2013denser}. However, finding optimal quasi-cliques is also NP-hard \cite{kawase2018densest,tsourakakis2015streaming}.   Another popular and scalable approach to finding dense components is based on $k$-cores \cite{esfandiari2018parallel}. Recently, $k$-cores have also been used to detect anomalies in large-scale networks \cite{giatsidis2014corecluster,shin2016corescope}.  

The interested reader may refer to the recent survey by Gionis and Tsourakakis on the more broad topic of dense subgraph discovery \cite{gionis2015dense}.  In the following, we only provide a brief overview of work related to the densest subgraph problem. 
 
\spara{Densest subgraph problem (DSP).} The goal of the {\em densest subgraph problem} (DSP) is to find the set of nodes $S$ which maximizes the degree density $\rho(S)$.
The densest subgraph can be identified in polynomial time by solving a maximum flow problem \cite{gallo1989fast,khuller2009finding,goldberg1984finding}.  
Charikar \cite{charikar2000greedy} proved that the greedy algorithm proposed by Asashiro et al.\footnote{Despite the fact that the greedy algorithm was originally proposed in \cite{asahiro2000greedily}, it is widely known as Charikar's greedy algorithm.}
\cite{asahiro2000greedily} produces a $\nicefrac{1}{2}$-approximation of the densest subgraph in linear time.
To obtain fast algorithms with better approximation factors,
McGregor et. al. \cite{mcgregor2015densest}, and Mitzenmacher et. al. \cite{mitzenmacher2015scalable}
uniformly sparsified the input graph, and computed the densest subgraph in the resulting sparse graph.
The first near-linear time algorithm for the DSP, given by Bahmani et. al. \cite{bahmani2014efficient},
relies on approximately solving the LP dual of the DSP.
It is worth mentioning that 
Kannan and Vinay \cite{kannan1999analyzing} gave a spectral $O(\log{n})$ approximation algorithm for a related notion of density.

\spara{Charikar's greedy peeling algorithm.} Since our algorithm {\sc Greedy++} is an improvement over Charikar's greedy algorithm, we discuss the latter algorithm in greater detail.  The algorithm removes in each iteration, the node with the smallest degree. This process creates a nested sequence of sets of nodes $V=S_n \supset S_{n-1} \supset S_{n-2} \supset \ldots \supset S_1 \supset \emptyset$. The algorithm outputs the graph $G[S_j]$ that maximizes the degree density among  $j=1,\ldots,n$. The pseudocode is shown in Algorithm~\ref{alg:greedy}.

\begin{algorithm}
	\caption{ {\sc Greedy}}
	\label{alg:greedy}
	\begin{flushleft}
		\textbf{Input}: Undirected graph $G$ 
		
		\textbf{Output}:  A dense subgraph of $G$: $G_{\textsf{densest}}$.
	\end{flushleft}
	\begin{algorithmic}[1]
		\State $G_{\textsf{densest}} \leftarrow G$
		\State $H \leftarrow G$;
		\While{$H \neq \emptyset$}
		\State Find the vertex $u \in H$ with minimum $\textsf{deg}_H(u)$;
		\State Remove $u$ and all its adjacent edges $uv$ from $H$;
		\If{$\rho(H) > \rho(G_{\textsf{densest}})$}
		\State $G_{\textsf{densest}} \leftarrow H$
		\EndIf
		\EndWhile
		\State Return $G_{\textsf{densest}}$.
	\end{algorithmic}
\end{algorithm} 

\spara{Fast numerical approximation algorithms for DSP.} Bahmani et. al. \cite{bahmani2014efficient} approached the DSP via its dual problem, which in turn they reduced to $O(\log n)$
instances of solving a positive linear program.
To solve these LPs, they employed the multiplicative
weights update framework \cite{arora2012multiplicative, plotkin1995fast}
to achieve an $\eps$-approximation in ${O}(\log n / \eps^2)$ iterations,
where each iteration requires $O(m)$ work.

\spara{Notable extensions of the DSP.} The DSP has been studied in weighted graphs, as well as directed graphs. When the edge weights are non-negative, both the maximum flow algorithm and Charikar's greedy algorithm maintain their theoretical guarantees. In the presence of negative weights, the DSP in general becomes NP-hard \cite{tsourakakis2019novel}.   For directed graphs Charikar \cite{charikar2000greedy} provided a linear programming approach which requires the computation of $n^2$ linear programs
and a $\nicefrac{1}{2}$-approximation algorithm which runs in $O(n^3+n^2m)$ time. Khuller and Saha have provided more efficient implementations of the exact and approximation algorithms for the undirected and directed versions of the DSP  \cite{khuller2009finding}.   Furthermore, Tsourakakis et al. recently extended the DSP to the $k$-clique, and the $(p,q)$-biclique densest subgraph problems \cite{tsourakakis2015k,mitzenmacher2015scalable}. These extensions can be used for finding large near-cliques in general graphs and bipartite graphs  respectively.  The DSP has also been studied in the dynamic setting \cite{bhattacharya2015space,epasto2015efficient,sawlani2019near}, 
the streaming setting \cite{bahmani2012densest,bhattacharya2015space,mcgregor2015densest,esfandiari2015applications}, and in the MapReduce computational model \cite{bahmani2012densest}.  Bahmani, Goel, and Munagala use the  multiplicative weights update framework \cite{arora2012multiplicative,plotkin1995fast} to design an  improved MapReduce algorithm \cite{bahmani2014efficient}. We discuss this method in greater detail in Section~\ref{sec:prop}.
Tatti and Gionis \cite{tatti2015density} introduced a novel graph decomposition known as \emph{locally-dense}, that imposes certain insightful constraints on the k-core decomposition.
Further, efficient algorithms to find locally-dense subgraphs were developed by Danisch et al. \cite{danisch2017large}.

We notice that in the DSP there are no restrictions on the size of the output. When restrictions on the size of $S$ are imposed the problem becomes NP-hard.  The densest-$k$-subgraph problem asks for find the subgraph $S$ with maximum degree density among all possible sets $S$ such that $|S|=k$. The state-of-the art algorithm is due to Bhaskara et al. \cite{bhaskara2010detecting}, and provides a $O(n^{1/4+\epsilon})$ approximation in $O(n^{1/\epsilon})$ time. A long standing question is closing the gap between this upper bound and the lower bound. Other versions where $|S|\geq k, |S|\leq k$ have also been considered in the literature see \cite{andersen2009finding}.

\section{Proposed Method} 
\label{sec:prop} 
\subsection{The {\sc Greedy++} algorithm}
\label{subsec:greedyplus} 

As we discussed earlier, Charikar's peeling algorithm greedily removes the node of smallest degree from the graph, and returns the densest subgraph among the sequence of $n$ subgraphs created by this procedure. While ties may exist, and are broken arbitrarily, for the moment it is useful to think as if  Charikar's greedy algorithm produces a single permutation of the nodes, that naturally defines a nested sequence of subgraphs. 

\spara{Algorithm description.} Our proposed algorithm {\sc Greedy++} iteratively runs Charikar's peeling algorithm, while keeping some information about the past runs. This information is crucial, as it results in different permutations, that naturally yield higher quality outputs. The pseudocode for \textsc{Greedy++} is shown in Algorithm~\ref{alg:main}. It takes as input the graph $G$, and a parameter $T$ of the number of passes to be performed, and runs an iterative, weighted peeling procedure. In each round the load of each node is a function of its induced degree and the load from the previous rounds. It is worth outlining that the algorithm is easy to implement, as it is essentially $T$ instances of Charikar's algorithm. What is less obvious perhaps, is why this algorithm makes sense, and works well.  We answer this question in detail in Section~\ref{subsec:mwu}.

\begin{algorithm}
	\caption{ {\sc Greedy++}}
	\label{alg:main}
	\begin{flushleft}
		\textbf{Input}: Undirected graph $G$, iteration count $T$
		
		\textbf{Output}: An approximately densest subgraph of $G$: $G_{\textsf{densest}}$.
	\end{flushleft}\begin{algorithmic}[1]
		\State $G_{\textsf{densest}} \leftarrow G$
		\State Initialize the vertex load vector $\ell^{(0)} \leftarrow 0 \in \mathbb{Z}^n$;
		\For{$i : 1 \rightarrow T$}
		\State $H \leftarrow G$;
		\While{$H \neq \emptyset$}
		\State Find the vertex $u \in H$ with minimum $\ell^{(i-1)}_u + \textsf{deg}_H(u)$;
		\State $\ell^{(i)}_u \leftarrow \ell^{(i-1)}_u + \textsf{deg}_H(u)$;
		\State Remove $u$ and all its adjacent edges $uv$ from $H$;
		\If{$\rho(H) > \rho(G_{\textsf{densest}})$}
		\State $G_{\textsf{densest}} \leftarrow H$
		\EndIf
		\EndWhile
		\EndFor
		\State Return $G_{\textsf{densest}}$.
	\end{algorithmic}
\end{algorithm}

\spara{Example.} We provide a graph instance that clearly illustrates why {\sc Greedy++} is a significant improvement over the classical greedy algorithm. We discuss the first two rounds of {\sc Greedy++}. Consider the following graph $G=B \bigcup \left(\cup_{i=1}^k H_i\right)$ where $B = K_{d,D}$ and $H_i=K_{d+2}$.
Namely  $G$ is a disjoint union of a complete $d \times D$ bipartite graph $B$, and of $k$ $(d+2)$-cliques $H_1,\ldots,H_k$. Consider the case where  $d \ll D, k \rightarrow +\infty$. $G$ is pictured in Figure~\ref{fig:example}(a).
The density of $G$ is  
\[
\dfrac{2dD + (d+1)(d+2)k}{2d + 2D + 2k(d+2)} \rightarrow \frac{d+1}{2}. 
\] 
Notice that this is precisely the density of any $(d+2)$-clique. However, the density of $B$ is $\frac{dD}{d+D} \approx d$, which is in fact the optimal solution.
Charikar's algorithm outputs $G$ itself, since it starts eliminating nodes of degree $d$ from $B$,
and by doing this, it never sees a subgraph with higher density. This example illustrates that the $\frac{1}{2}$ approximation is tight.  Consider now a run of {\sc Greedy++}. 

In its first iteration, it simply emulates Charikar's algorithm. The $D-d$ vertices of $B$ which were eliminated first - each have load $d$. At this stage, our input is the disjoint union of $k$ cliques and  a $d\times d$ bipartite graph. Of the remaining $2d$ vertices in $B$, one vertex is charged with load $d$,
two vertices each with loads $(d-1),(d-2),\ldots,1$,  and one vertex with load $0$.
On the other hand, vertices in $H_i$ are charged with loads $d+1, d, \ldots, 0$.
Figure~\ref{fig:example}(b) shows the cumulative degrees of vertices in $G$ after one iteration
of \textsc{Greedy++}.

 Without any loss of generality let us assume the vertex from $B$ that got charged $0$ originally had degree $d$.  This vertex in the second iteration will  get deleted first, and the vertex whose sum of load and degree is $d+1$ will get deleted second.   But after these two, all the cliques get peeled away by the algorithm.
This leaves us with a $d \times D-2$ bipartite graph as the output after the second iteration, whose density is almost optimal.

\begin{figure}[!ht]
	\centering
	
	\begin{tabular}{@{}c@{}}
		\begin{tikzpicture}[scale=0.37]
		\node at (2,9) {$\bm{B = K_{d,D}}$};
		
		\coordinate (1) at (0,-3);
		\coordinate (2) at (0,-1.5);
		\coordinate (3) at (0,0);
		\node at (0,1.5) {$\vdots$};
		\coordinate (4) at (0,3);
		
		\coordinate (5) at (4,-3);
		\coordinate (6) at (4,-1.5);
		\coordinate (7) at (4,0);
		\node at (4,1.5) {$\vdots$};
		\coordinate (8) at (4,3);
		\coordinate (9) at (4,4.5);
		\node at (4,6) {$\vdots$};
		\coordinate (10) at (4,7.5);
		
		\node[mynode, label=left:{$(D)~a_1$}] at (1) {};
		\node[mynode, label=left:{$(D)~a_2$}] at (2) {};
		\node[mynode, label=left:{$(D)~a_3$}] at (3) {};
		\node[mynode, label=left:{$(D)~a_d$}] at (4) {};
		\node[mynode, label=right:{$b_1~(d)$}] at (5) {};
		\node[mynode, label=right:{$b_2~(d)$}] at (6) {};
		\node[mynode, label=right:{$b_3~(d)$}] at (7) {};
		\node[mynode, label=right:{$b_d~(d)$}] at (8) {};
		\node[mynode, label=right:{$b_{d+1}~(d)$}] at (9) {};
		\node[mynode, label=right:{$b_D~(d)$}] at (10) {};
		
		\foreach \x in {1,2,...,4}{
			\foreach \y in {5,6,...,10}{
				\draw[line width=0.2pt] (\x) -- (\y);
			}
		}
		
		\node at (14,9) {$\bm{H_i = K_{d+2}}$};
		
		\coordinate (11) at (14,0);
		\coordinate (12) at (16,1);
		\coordinate (13) at (16,3);
		\coordinate (14) at (14,4);
		\coordinate (15) at (12,3);
		\node at (12,2.25) {$\vdots$};
		\coordinate (16) at (12,1);
		
		\node[mynode, label=below:{\begin{tabular}{c} $c_1$ \\ $(d+1)$ \end{tabular}}] at (11) {};
		\node[mynode, label=right:{$c_2$}] at (12) {};
		\node at (17,0) {$(d+1)$};
		\node[mynode, label=right:{$c_3$}] at (13) {};
		\node at (17,4) {$(d+1)$};
		\node[mynode, label=above:{\begin{tabular}{c} $(d+1)$ \\ $c_4$ \end{tabular}}] at (14) {};
		\node[mynode, label=left:{$c_5$}] at (15) {};
		\node at (11,4) {$(d+1)$};
		\node[mynode, label=left:{$c_{d+2}$}] at (16) {};
		\node at (11,0) {$(d+1)$};
		
		\foreach \y in {12,...,16}{
			\draw[line width=0.2pt] (11) -- (\y);
		}
		\foreach \y in {13,...,16}{
			\draw[line width=0.2pt] (12) -- (\y);
		}
		\foreach \y in {14,15,16}{
			\draw[line width=0.2pt] (13) -- (\y);
		}
		\foreach \y in {15,16}{
			\draw[line width=0.2pt] (14) -- (\y);
		}
		
		\end{tikzpicture} \\[\abovecaptionskip]
		
		\small (a) Initial degrees of $G$
	\end{tabular}
	
	\vspace{\floatsep}
	
	\begin{tabular}{@{}c@{}}
		\begin{tikzpicture}[scale=0.37]
		\node at (2,9) {$\bm{B = K_{d,D}}$};
		
		\coordinate (1) at (0,-3);
		\coordinate (2) at (0,-1.5);
		\coordinate (3) at (0,0);
		\node at (0,1.5) {$\vdots$};
		\coordinate (4) at (0,3);
		
		\coordinate (5) at (4,-3);
		\coordinate (6) at (4,-1.5);
		\coordinate (7) at (4,0);
		\node at (4,1.5) {$\vdots$};
		\coordinate (8) at (4,3);
		\coordinate (9) at (4,4.5);
		\node at (4,6) {$\vdots$};
		\coordinate (10) at (4,7.5);
		
		\node[mynode, label=left:{$(D+1)~a_1$}] at (1) {};
		\node[mynode, label=left:{$(D+2)~a_2$}] at (2) {};
		\node[mynode, label=left:{$(D+3)~a_3$}] at (3) {};
		\node[mynode, label=left:{$(D+d)~a_d$}] at (4) {};
		\node[mynode, label=right:{$b_1~(d)$}] at (5) {};
		\node[mynode, label=right:{$b_2~(d+1)$}] at (6) {};
		\node[mynode, label=right:{$b_3~(d+2)$}] at (7) {};
		\node[mynode, label=right:{$b_d~(2d-1)$}] at (8) {};
		\node[mynode, label=right:{$b_{d+1}~(2d)$}] at (9) {};
		\node[mynode, label=right:{$b_D~(2d)$}] at (10) {};
		
		\foreach \x in {1,2,...,4}{
			\foreach \y in {5,6,...,10}{
				\draw[line width=0.2pt] (\x) -- (\y);
			}
		}
		
		\node at (14,9) {$\bm{H_i = K_{d+2}}$};
		
		\coordinate (11) at (14,0);
		\coordinate (12) at (16,1);
		\coordinate (13) at (16,3);
		\coordinate (14) at (14,4);
		\coordinate (15) at (12,3);
		\node at (12,2.25) {$\vdots$};
		\coordinate (16) at (12,1);
		
		\node[mynode, label=below:{\begin{tabular}{c} $c_1$ \\ $(2d+2)$ \end{tabular}}] at (11) {};
		\node[mynode, label=right:{$c_2$}] at (12) {};
		\node at (17,0) {$(2d+1)$};
		\node[mynode, label=right:{$c_3$}] at (13) {};
		\node at (17,4) {$(2d)$};
		\node[mynode, label=above:{\begin{tabular}{c} $(2d-1)$ \\ $c_4$ \end{tabular}}] at (14) {};
		\node[mynode, label=left:{$c_5$}] at (15) {};
		\node at (11,4) {$(2d-2)$};
		\node[mynode, label=left:{$c_{d+2}$}] at (16) {};
		\node at (11,0) {$(d+1)$};
		
		\foreach \y in {12,...,16}{
			\draw[line width=0.2pt] (11) -- (\y);
		}
		\foreach \y in {13,...,16}{
			\draw[line width=0.2pt] (12) -- (\y);
		}
		\foreach \y in {14,15,16}{
			\draw[line width=0.2pt] (13) -- (\y);
		}
		\foreach \y in {15,16}{
			\draw[line width=0.2pt] (14) -- (\y);
		}
		
		\end{tikzpicture} \\[\abovecaptionskip]
		\small (b) Cumulative degrees (degree + load) of $G$ after one iteration
	\end{tabular}
	
	\caption{\label{fig:example}Illustration of two iterations of \textsc{Greedy++} on $G$.
	The output after one iteration is $\bm{G}$ itself (density $\bm{\approx (d+1)/2}$),
	whereas the output after the
	second iteration is $\bm{B \setminus \{b_1, b_2\}}$ (density $\bm{\approx d}$).}
	\vspace{-4mm}
\end{figure}

\spara{Theoretical guarantees.} Before we prove our theoretical properties for our proposed algorithm  {\sc Greedy++}, it is worth outlining that experiments indicate that the performance of  {\sc Greedy++} is significantly better than the worst-case analysis we perform.  Furthermore, we conjecture that our guarantees are not tight from a theoretical perspective; an interesting open question is to extend our analysis in Section~\ref{subsec:mwu} for {\sc Greedy++} to prove that it provides asymptotically an optimal solution for the DSP. We conjecture that our algorithm is a $(1+\frac{1}{\sqrt{T}})$-approximation algorithm for the DSP. Our fist lemma states that {\sc Greedy++} is a $2$-approximation algorithm for the DSP. 

\begin{lemma}
\label{lem1} 
Let $G_{\textsf{densest}}$ the output of {\sc Greedy++}. Then, $\rho(G_{\textsf{densest}}) \geq \rho_G^*/2$, where $\rho_G^*$
denotes the optimum value of the problem.
\end{lemma}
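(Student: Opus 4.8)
The plan is to reduce the claim to the known approximation guarantee of Charikar's greedy peeling by observing that the first pass of \textsc{Greedy++} \emph{is} Charikar's algorithm, and then to give a short self-contained proof of the latter's $2$-approximation. Concretely, in the iteration $i=1$ the load vector is $\ell^{(0)}=\zero$, so the vertex minimizing $\ell^{(0)}_u+\deg_H(u)$ is exactly the minimum-degree vertex of $H$; hence the first pass peels vertices in precisely the order of Algorithm~\ref{alg:greedy}. Since $G_{\textsf{densest}}$ is updated to the densest subgraph encountered across \emph{all} passes and is never made worse, it suffices to lower bound the density of the best subgraph seen during this first pass; later passes can only improve the output, so they are irrelevant to a lower bound.

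The core is a structural fact about the optimal set. Let $S^*$ attain the optimum, so $\rho(S^*)=\rho_G^*$. First I would show that every vertex $v\in S^*$ satisfies $\deg_{S^*}(v)\ge \rho_G^*$: if some $v$ had $\deg_{S^*}(v)<\rho_G^*$, then removing $v$ leaves $e[S^*\setminus\{v\}]=e[S^*]-\deg_{S^*}(v) > \rho_G^*|S^*|-\rho_G^* = \rho_G^*(|S^*|-1)$, so $\rho(S^*\setminus\{v\})>\rho_G^*$, contradicting optimality. (In the weighted case replace $e[\cdot]$ by $w(\cdot)$ and degrees by weighted degrees; the argument is identical.)

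Next I would track the first pass up to the moment it is about to remove the \emph{first} vertex of $S^*$, call it $v$, and let $H$ be the current graph at that instant. Because vertices are deleted one at a time and $v$ is the first element of $S^*$ to go, we have $S^*\subseteq H$. Since $v$ is the minimum-degree vertex of $H$ and $S^*\subseteq H$ gives $\deg_H(v)\ge \deg_{S^*}(v)\ge \rho_G^*$, every vertex $u\in H$ obeys $\deg_H(u)\ge \deg_H(v)\ge \rho_G^*$. Summing, $2\,e[H]=\sum_{u\in H}\deg_H(u)\ge \rho_G^*\,|H|$, so $\rho(H)=e[H]/|H|\ge \rho_G^*/2$. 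This subgraph $H$ is examined by the algorithm, hence $\rho(G_{\textsf{densest}})\ge\rho(H)\ge\rho_G^*/2$.

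The argument is essentially routine once framed this way, so I do not expect a genuine obstacle; the only things to be careful about are the bookkeeping that the entire set $S^*$ is still present at the critical deletion (guaranteed by choosing the \emph{first} $S^*$-vertex removed) and that tracking only the first pass is legitimate (guaranteed by the monotone ``best-so-far'' update of $G_{\textsf{densest}}$). A secondary point worth stating explicitly is that nothing about the accumulated loads $\ell^{(i)}$ is needed for the $2$-approximation; those loads are what one would exploit to push toward the conjectured $(1+\tfrac{1}{\sqrt{T}})$ bound, not the crude factor $2$.
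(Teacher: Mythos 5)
Your proof is correct and takes the same route as the paper: the paper's entire argument is the one-line observation that the first pass of \textsc{Greedy++} (with $\ell^{(0)}=\zero$) coincides with Charikar's algorithm and that $G_{\textsf{densest}}$ only improves across passes. You additionally write out the standard proof of Charikar's $2$-approximation (minimum degree at the first deletion of an $S^*$-vertex is at least $\rho_G^*$), which the paper delegates to \cite{charikar2000greedy}; that filling-in is accurate, including the check that the critical subgraph $H$ is actually one of the subgraphs whose density the algorithm records.
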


\begin{proof}
Notice that the first iteration is identical to Charikar's $2$-approximation algorithm, and $G_{\textsf{densest}}$ is at least as dense as the output of the first iteration. \qedhere
\end{proof}

The next lemma provides bounds the quality of the dual solution, i.e., at each iteration the average load (average over the algorithm's iterations) assigned to any vertex is at most $2\rho_G^*$.

\begin{lemma}
The following invariant holds for {\sc Greedy++}: for any vertex $v$ and iteration $i$, $\ell_v^{(i)} \leq 2i \cdot \rho^*_G.$
\end{lemma}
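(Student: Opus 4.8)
The plan is to prove the invariant by induction on the iteration index $i$, with the base case $i=0$ being immediate since $\ell^{(0)}_v = 0$ for every $v$. The crux is to observe what the greedy peeling rule guarantees at the moment a vertex is removed. In iteration $i$, when a vertex $v$ is peeled from the current residual subgraph $H$, the algorithm sets $\ell^{(i)}_v = \ell^{(i-1)}_v + \deg_H(v)$, and by the selection rule this value equals exactly $\min_{w \in H}\bigl(\ell^{(i-1)}_w + \deg_H(w)\bigr)$. Since the minimum of a collection of numbers is at most their average, I would bound this by the average of $\ell^{(i-1)}_w + \deg_H(w)$ taken over $w \in H$.

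The next step is to split this average into two pieces and bound each. For the degree piece, $\sum_{w \in H}\deg_H(w) = 2e[H] = 2|H|\,\rho(H)$, and since $\rho^*_G$ is by definition the maximum density over all subgraphs, $\rho(H) \le \rho^*_G$; hence the degree piece contributes at most $2\rho^*_G$ to the average. For the load piece, the inductive hypothesis applied to iteration $i-1$ gives $\ell^{(i-1)}_w \le 2(i-1)\rho^*_G$ for every $w$, in particular for every $w \in H$, so the load piece contributes at most $2(i-1)\rho^*_G$ to the average. Adding the two bounds yields
\[
\ell^{(i)}_v \;\le\; 2(i-1)\rho^*_G + 2\rho^*_G \;=\; 2i\,\rho^*_G,
\]
which is exactly the claimed invariant; since every vertex has its load set once during iteration $i$, the bound holds for all $v$, completing the induction.

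There is no serious obstacle here: the argument is a clean ``min-at-most-average'' induction, structurally identical to Charikar's minimum-degree-at-most-twice-the-density bound, but carried out with the accumulated loads folded in. The only points requiring care are bookkeeping ones. First, the value $\ell^{(i-1)}_w$ used in the comparison is the completed load vector from the previous full iteration and not a partially updated one, so the inductive hypothesis applies verbatim. Second, the degree term $\deg_H(w)$ is measured in the current residual subgraph $H$ at the instant $v$ is peeled, which is precisely what makes $\sum_{w\in H}\deg_H(w) = 2e[H]$ and lets us invoke $\rho(H)\le\rho^*_G$ for that particular $H$. Neither causes real difficulty, but both must be stated explicitly so that the averaging step is applied to the correct quantities.
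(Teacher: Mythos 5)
Your proof is correct and follows essentially the same route as the paper's: induction on the iteration index, bounding the minimum cumulative value $\ell^{(i-1)}_v + \deg_H(v)$ by the average over the current residual subgraph, and splitting that average into the load part (bounded by $2(i-1)\rho^*_G$ via the inductive hypothesis) and the degree part (bounded by $2\rho(H) \le 2\rho^*_G$). The only cosmetic difference is that you start the induction at $i=0$ while the paper treats $i=1$ as the base case via Charikar's original argument.
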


\begin{proof}
	First, let $i=1$.
	The proof for this base case goes through identically as in
	\cite{charikar2000greedy}.
	\begin{align*}
	\ell_v^{(1)} = \deg_{G^{(1)}_v}(v) & \leq \dfrac{1}{|V_v^{(i)}|} \sum_{u \in V_v^{(i)}} \deg_{G^{(i)}_v}(u) \\
	& =\dfrac{2|E^{(i)}_v|}{|V_v^{(i)}|}\\
	& = 2 \cdot \rho_{G_v} \leq 2 \cdot \rho^*_G.
	\end{align*}
	
	Now, assume that the statement is true for some iteration index $i-1$.
	Consider the point at which vertex $v$ is chosen in iteration $i$.
	Denote the graph at that instant to be $G^{(i)}_v = \left\langle V^{(i)}_v, E^{(i)}_v \right\rangle$.
	For any vertex $u$ at that point, the cumulative degree is
	$\ell_u^{(i-1)} + \deg_{G^{(i)}_v}(u)$.
	Since $v$ has the minimum cumulative degree at that point,
	\begin{align*}
	\ell_v^{(i)} = \ell_v^{(i-1)} + \deg_{G^{(i)}_v}(v) & \leq \dfrac{1}{|V_v^{(i)}|} \sum_{u \in V_v^{(i)}} \left(\ell_u^{(i-1)} + \deg_{G^{(i)}_v}(u)\right) \\
	& \leq 2(i-1) \rho_G^* + \dfrac{1}{|V_v^{(i)}|} \sum_{u \in V_v^{(i)}} \deg_{G^{(i)}_v}(u)\\
	& \leq 2i \cdot \rho^*_G. \qedhere
	\end{align*}
\end{proof}

\spara{Running time.} Finally, we bound the runtime of the algorithm as follows. The next lemma states that our algorithm can be implemented to run in $O((n+ m) \cdot \min(\log n, T))$.

\begin{lemma}
	Each iteration of the above algorithm runs in time $O((n+m) \cdot \min(\log n, T))$.
\end{lemma}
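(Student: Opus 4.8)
The plan is to recognize that a single iteration of \textsc{Greedy++} is exactly a sequence of priority-queue operations on keys $\ell^{(i-1)}_u + \deg_H(u)$, and then to give two different implementations of that priority queue whose running times are the two terms inside the $\min$. First I would account for the operations: starting from $H = G$ we insert all $n$ vertices, then repeatedly extract the vertex of minimum key and delete it. Each deletion of a vertex $u^*$ removes its incident edges and decrements the key of each surviving neighbor by one; since every edge is deleted exactly once over the iteration, the total number of such decrease-key operations is exactly $m$, while there are $n$ extract-min operations and $n$ insertions. Thus each iteration is $O(n+m)$ priority-queue operations, and the lemma reduces to servicing these in amortized $O(\min(\log n, T))$ time per operation.

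For the $\log n$ bound I would implement the queue as a binary min-heap keyed by $\ell^{(i-1)}_u + \deg_H(u)$. Since each key is a nonnegative integer bounded by $Tn$, it fits in a machine word and comparisons are $O(1)$; hence build-heap costs $O(n)$ and each extract-min or decrease-key costs $O(\log n)$, giving $O((n+m)\log n)$ for the iteration.

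For the $T$ bound I would instead use a bucketed queue: an array of buckets indexed by key value, each holding a doubly linked list of the vertices with that key, together with a pointer to the lowest nonempty bucket. The two ingredients are (i) a bound on the range of keys and (ii) a monotonicity argument for the pointer. For (i), since $\deg_H(u)\le \deg_{\max}$ every peel adds at most $\deg_{\max}$ to a load, so by induction $\ell^{(i-1)}_u \le (i-1)\deg_{\max}$ and every key in iteration $i$ is at most $i\cdot \deg_{\max} \le T\deg_{\max} \le Tn$; hence the number of buckets is $R = O(Tn)$. For (ii) I would use that within a fixed iteration the load $\ell^{(i-1)}_u$ is constant, so a vertex's key only decreases as the graph shrinks; consequently, when we extract the minimum from bucket $p$ and decrement its neighbors, each neighbor's key drops to at least $p-1$, so the minimum falls by at most one per extraction. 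Amortizing, the pointer moves backward at most $n$ times and, staying within $[0,R]$, moves forward $O(n+R)$ times in total. With $O(1)$-time decrease-key (move one bucket down) and $O(1)$-time insertion, the iteration costs $O(n + m + R) = O(n + m + Tn) = O(T(n+m))$, using $T\ge 1$.

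Combining the two implementations — choosing the heap when $\log n \le T$ and the bucket queue otherwise, both choices being fixed in advance from the known values $n$ and $T$ — yields the claimed $O((n+m)\cdot \min(\log n, T))$ bound. I expect the main obstacle to be the bucketed analysis: establishing the $O(Tn)$ key range from the load-growth bound and proving the amortized $O(n+R)$ pointer cost via the ``keys only decrease, so the minimum drops by at most one per step'' monotonicity. This is precisely the generalization of the linear-time implementation of Charikar's peeling, where the extra factor $T$ (rather than a single pass) appears because accumulated loads widen the bucket range by a factor of up to $i \le T$.
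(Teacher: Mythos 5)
Your proposal is correct and follows essentially the same route as the paper's proof: account for the $O(n+m)$ queue operations per pass, then implement the minimum-selection either with a bucket array whose range grows with the accumulated loads (giving the $O(T)$-per-operation bound via the ``minimum drops by at most one per deletion'' monotonicity) or with a priority queue (giving the $O(\log n)$ bound), and take whichever is cheaper. Your version is in fact slightly tighter on the bucket range ($O(Tn)$ via $\ell_u^{(i-1)} \le (i-1)\deg_{\max}$, versus the paper's cruder $O(Tm)$), but the argument is the same.
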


\begin{proof}
	The deletion operation, along with assigning edges to a vertex
	and updating degrees takes $O(m)$ time since every edge is assigned once.	Finding the minimum degree vertex can be implemented in two ways:
	\begin{enumerate}
		\item Since degrees in our algorithm can go from $0$ to $2Tm$,
		we can create lists for each separate integer degree value.
		Now we need to scan each list from $\deg=1$ to $\deg=2Tm$.
		However, after deleting a vertex of degree $d$, we only need
		to scan from $d-1$ onwards. So the total time taken is $O(2Tm+n) = O(mT)$.
		\item We can maintain a priority queue, which needs a total of $O(m)$ update operations,
		each taking $O(\log n)$ time. \qedhere
	\end{enumerate}
\end{proof}

Note that in the case of weighted graphs, we cannot maintain lists for each possible degree,
and hence, it is necessary to use a priority queue.

\subsection{Why does  {\sc Greedy++} work well?} 
\label{subsec:mwu} 

Explaining the intuition behind  {\sc Greedy++}  requires an understanding of the load balancing interpretation of Charikar's LP for the DSP \cite{charikar2000greedy}, and the multiplicative weights update (MWU) framework  by Plotkin, Shmoys and Tardos \cite{plotkin1995fast} used for packing/covering LPs. In the context of the DSP, the MWU  framework was first used by Bahmani, Goel, and Munagala \cite{bahmani2014efficient}. We include a self-contained exposition of the required concepts from  \cite{bahmani2014efficient,charikar2000greedy} in this section, that has a natural flow and concludes with our algorithmic contributions. Intuitively, the additional passes that {\sc Greedy++} performs, improve the load balancing.   

\spara{Charikar's LP and the load balancing interpretation.} The following is a well-known LP formulation of the densest subgraph problem, introduced in \cite{charikar2000greedy},
which we denote by $\textsc{Primal}(G)$. The optimal objective value is known to be $\rho_G^*$.

\begin{equation*}
\begin{array}{ll@{}ll}
\text{maximize}  & \displaystyle\sum\limits_{e \in E} & y_e &\\
\text{subject to} & & y_e \leq x_u, \qquad & \forall e=uv \in E\\
&                  & y_e \leq x_v,        & \forall e=uv \in E\\
& \displaystyle\sum\limits_{v \in V} & x_v \leq 1,\\
&                  & y_e \geq 0,        & \forall e \in E\\
&                  & x_v \geq 0,        & \forall v \in V
\end{array}
\end{equation*}

We then construct the dual LP for the above problem.
Let $f_{e}(u)$ be the dual variable associated with the first $2m$ constraints of the form $y_e \leq x_u$,
and let $D$ be associated with the last constraint.
We get the following LP, which we denote by $\textsc{Dual}(G)$, and whose optimum is also $\rho_G^*$.

\begin{equation*}
\begin{array}{lr@{}ll}
\text{minimize}  & & D &\\
\text{subject to} & f_e(u) + & f_e(v) \geq 1, \qquad & \forall e=uv \in E\\
&\ell_v \defeq \displaystyle\sum\limits_{e \ni v} & f_e(v) \leq D, & \forall v \in V\\
&                  & f_e(u) \geq 0,        & \forall e=uv \in E\\
&                  & f_e(v) \geq 0,        & \forall e=uv \in E
\end{array}
\end{equation*}

This LP can be visualized as follows.  Each edge $e=uv$ has a load of $1$, which it wants to send to its end points: $f_e(u)$ and $f_e(v)$ such that the total load of any vertex $v$, $\ell_v$, is at most $D$.  The objective is to find the minimum $D$ for which such a load assignment is feasible.

For a fixed $D$, the above dual problem can be framed as a flow problem on a bipartite graph as follows:
Let the left side $L$ represent $V$ and the right side $R$ represent $E$. Add a super-source $s$ and edges from $s$ to all vertices in $L$ with capacity $D$. Add edges from $v \in V$ to $e \in E$ if $e$ is incident on $v$ in $G$. All vertices in $R$ have demands of $1$ unit.
Although Goldberg's initial reduction \cite{goldberg1984finding} involved a different flow network, this graph can also be used to use maximum flow and use that to find the exact optimum to our problem. From strong duality, we know that the optimal objective values
of both linear programs are equal, i.e., exactly $\rho_G^*$.
Let $\rho_G$ be the objective of any feasible solution to $\textsc{Primal}(G)$.
Similarly, let $\hat \rho_G$ be the objective of any feasible solution to $\textsc{Dual}(G)$. Then, by optimality of $\rho_G^*$ and weak duality, we obtain the optimality result $\rho_G \leq \rho_G^* \leq \hat\rho_G$.

\spara{Bahmani et al. \cite{bahmani2014efficient}}   use the following covering LP formulation: decide the feasibility of constraints $f_e(u) + f_e(v) \ge 1$ for each edge $e = uv \in E$ subject to the polyhedral constraints:
\begin{align*}
\sum_{e \ni v}f_e(v) &\le D, \qquad &\forall v \in V\\
f_e(u)&\ge 0, &\forall e = uv \in E\\
f_e(v)&\ge 0, &\forall e = uv \in E
\end{align*}

The width of this linear program is the maximum value of $f_e(u)+f_e(v)$ provided that $f_e(u), f_e(v)$ satisfy the constraints of the program.  Bahmani et al. in order to provably bound the \emph{width}  of the above LP,
they introduce another set of simple constraints as follows:
\begin{align*}
\sum_{e \ni v}f_e(v) &\le D, \qquad &\forall v \in V\\
q \ge f_e(u)&\ge 0, &\forall e = uv \in E\\
q \ge f_e(v)&\ge 0, &\forall e = uv \in E
\end{align*}
\noindent where $q \ge 1$ is a small constant.
So, for a particular value of $D$, they verify the approximate feasibility of the covering problem using the MWU framework.
However, this necessitates running a binary search over all possible
values of $D$ and finding the lowest value of $D$ for which the LP is feasible.
Since the precision for $D$ can be as low as $\epsilon$, this
binary search is inefficient in practice.
Furthermore, due to the added $\ell_{\infty}$ constraint to bound the width,
extracting the primal solution (i.e. an approximately densest subgraph)
from the dual is no longer straightforward, and the additional
rounding step to overcome this incurs additional loss in the approximation factor.

In order to overcome these practical issues, we propose an alternate MWU formulation which sacrifices the width bounds 
but escapes the binary search phase over $D$.
Eliminating the artificial width bound
makes it straightforward to extract a primal solution.
Moreover, our experiments on real world graphs suggest that width  is not a bottleneck for the running time of the MWU algorithm. Even more importantly, our alternate formulation 
naturally yields {\sc Greedy++} as we explain in the following.
 
\spara{Our MWU formulation.}   We can denote the LP $\textsc{Dual}(G)$ succinctly as follows:
\begin{align*}
\text{minimize }\qquad & D \\
\text{subject to}\qquad &\Bbf \fbf \leq D\one \\
&\fbf \in \PP
\end{align*}
where $\fbf$ is the vector representation of the all $f_e(v)$ variables,
$\Bbf \in \Rbb^{n \times 2m}$ is the matrix denoting the left hand side
of all constraints of the form $\displaystyle\sum_{e \ni v}f_e(v) \le D$.
$\one$ denotes the vector of $1$'s and $\PP$ is a polyhedral constraint
set defined as follows:
\begin{align*}
f_e(u) + f_e(v) & \geq 1 & \forall e=uv \in E \\
f_e(u) & \geq 0 & \forall e \in E,\ \forall v \in e.
\end{align*}
Note that for any $\fbf \in \PP$,
we have that the minimum $D$ satisfying $B\fbf \le D\one$
is equal to $\gnorm{B\fbf}{\infty}$.
This follows due to the non-negativity of $B\fbf$ for any $\fbf \in \PP$.
Now a simple observation shows that for any non-negative vector $\ybf$,
we can write
\[
\gnorm{\ybf}{\infty} = \max_{\xbf \in\Delta_n^+}\xbf^T\ybf
\]
where $\Delta_n^+ := \{\xbf\ge \zero: \one^T\xbf \le 1\}$.
Hence, we can now write $\textsc{Dual}(G)$ as:
\begin{align}
\min_{\fbf \in \PP} \gnorm{\Bbf \fbf}{\infty} &= \min_{\fbf \in \PP}\max_{\xbf \in\Delta_n^+} \xbf^T\Bbf \fbf \nonumber\\
 &= \max_{\xbf \in\Delta_n^+}\min_{\fbf \in \PP} \xbf^T\Bbf \fbf.\label{eq:dual_maxmin}
\end{align}
Here the last equality follows due to strong duality of the convex optimization.

The ``inner" minimization part of \eqref{eq:dual_maxmin} can be performed easily.
In particular, we need an oracle which, given a vector $\xbf$, solves
\[
C(\xbf) = \min_{\fbf \in \PP} \sum_{e=uv} x_u f_{e}(u) + x_v f_{e}(v).
\]
\begin{lemma}\label{lem:inner_min}
	Given a vector $\xbf$, $C(\xbf)$ can be computed in $O(m)$ time.
\end{lemma}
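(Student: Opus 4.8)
The plan is to exploit the fact that the polyhedron $\PP$ is a product of independent per-edge constraints, so the objective $\sum_{e=uv} x_u f_e(u) + x_v f_e(v)$ decouples completely across edges. The only coupling among the two variables $f_e(u), f_e(v)$ attached to an edge $e=uv$ is the single constraint $f_e(u) + f_e(v) \ge 1$ together with their nonnegativity, and no constraint links variables belonging to distinct edges. Hence I would write $C(\xbf) = \sum_{e=uv \in E} c_e(\xbf)$, where $c_e(\xbf)$ denotes the value of the two-variable subproblem $\min \{\, x_u f_e(u) + x_v f_e(v) : f_e(u) + f_e(v) \ge 1,\ f_e(u), f_e(v) \ge 0 \,\}$.

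First I would solve this per-edge subproblem in closed form. Since $\xbf \in \Delta_n^+$ we have $x_u, x_v \ge 0$, so the linear objective is minimized at a vertex of the two-dimensional feasible region; the candidate vertices are $(f_e(u), f_e(v)) = (1,0)$ and $(0,1)$, with objective values $x_u$ and $x_v$ respectively. Equivalently, one can argue directly that the constraint must be tight at any optimum, substitute $f_e(v) = 1 - f_e(u)$, and note that the resulting univariate objective is monotone in $f_e(u)$. Either way, $c_e(\xbf) = \min(x_u, x_v)$, attained by placing the full unit of load on whichever endpoint of $e$ carries the smaller $\xbf$-value.

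Combining the decomposition with the per-edge formula yields $C(\xbf) = \sum_{e=uv \in E} \min(x_u, x_v)$, which a single pass over the edge list evaluates in $O(m)$ time, giving the claimed bound. I do not expect any genuine obstacle here; the only point worth stating carefully is the nonnegativity of the coordinates of $\xbf$, guaranteed by $\xbf \in \Delta_n^+$, without which an individual subproblem could be unbounded below and $C(\xbf)$ would equal $-\infty$. It is also worth recording that the minimizing $\fbf$ is the explicit indicator assignment described above, since this is precisely the load assignment that the weighted peeling step of {\sc Greedy++} will later be shown to realize.
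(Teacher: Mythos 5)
Your proposal is correct and matches the paper's proof, which likewise observes that the problem decouples edge by edge and that the optimal assignment places the full unit of load on the endpoint with the smaller $\xbf$-value, yielding a single $O(m)$ pass. Your additional remarks on the vertex argument and the role of nonnegativity are just a more careful spelling-out of the same one-line idea.
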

\begin{proof}
	For each edge $e=uv$, simply check which of $x_u$ and $x_v$ is smaller.
	WLOG, assume it is $x_u$.
	Then, set $f_{e}(u) = 1$ and $f_{e}(v) = 0$.
\end{proof}
We denote the optimal $\fbf$ for a given $\xbf$ as $\fbf(\xbf)$.
Now, using the above oracle, we can apply the MWU algorithm
to the ``outer" problem of \eqref{eq:dual_maxmin},
i.e., $\max_{\xbf \in \Delta_n^+}C(\xbf)$.
Additionally, to apply the MWU framework,
we need to estimate the width of this linear program.
The width for \eqref{eq:dual_maxmin} can be bounded by largest degree,
$d_{\max}$ of the graph $G$.
Indeed, we see in Lemma \ref{lem:inner_min} that $\fbf(\xbf)$ is a $0/1$ vector.
In that case, $\gnorm{B\fbf(x)}{\infty} \le d_{\max}$. 

We conclude our analysis of this alternative dual formulation of the DSP with the following theorem.

\begin{theorem}\label{thm:mwu}
	Our alternative dual formulation admits a MWU algorithm that outputs an $\fbf \in \PP$ such that $\gnorm{\Bbf \fbf}{\infty} \leq (1+\epsilon)\rho_G^*$.
\end{theorem}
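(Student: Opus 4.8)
The plan is to run the multiplicative weights update (MWU) meta-algorithm on the saddle-point formulation \eqref{eq:dual_maxmin}, viewing the $n$ vertices as experts and the inner oracle $C(\cdot)$ of Lemma~\ref{lem:inner_min} as the cost/payoff oracle for the weight player. Concretely, I would maintain at each round $t = 1, \dots, T$ a weight vector that, after normalization, yields a point $\xbf^{(t)} \in \Delta_n^+$; I would then call the oracle to obtain the minimizing $\fbf^{(t)} = \fbf(\xbf^{(t)}) \in \PP$ together with its cost $C(\xbf^{(t)}) = (\xbf^{(t)})^T \Bbf \fbf^{(t)}$; and I would feed the nonnegative load vector $\Bbf \fbf^{(t)}$ (whose entries lie in $[0, d_{\max}]$ by the width bound established just before the theorem) back as the gains driving the multiplicative update. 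The final output would be the time-average $\wb{\fbf} = \frac{1}{T}\sum_{t=1}^T \fbf^{(t)}$.

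First I would record two structural facts that make the averaging valid. Since $\PP$ is a convex polyhedron (the intersection of the halfspaces $f_e(u)+f_e(v)\ge 1$ with the nonnegative orthant), the average $\wb{\fbf}$ of feasible iterates is again feasible, so $\wb{\fbf} \in \PP$ as the theorem demands. Second, because $\Bbf\wb{\fbf}$ is nonnegative, the identity $\gnorm{\Bbf\wb{\fbf}}{\infty} = \max_{\xbf \in \Delta_n^+} \xbf^T \Bbf \wb{\fbf}$ recorded before the theorem lets me reinterpret the quantity to be bounded as a best-response value against the averaged play, which is exactly the object controlled by MWU regret.

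The core of the argument is the standard MWU (gain) regret guarantee applied to the gain vectors $\Bbf\fbf^{(t)} \in [0,d_{\max}]^n$. It gives, for every fixed $\xbf^\star \in \Delta_n^+$, a bound of the form $\frac{1}{T}\sum_t (\xbf^{(t)})^T \Bbf\fbf^{(t)} \ge (1-\epsilon)\frac{1}{T}\sum_t (\xbf^\star)^T \Bbf\fbf^{(t)} - \frac{d_{\max}\ln n}{\epsilon T}$. I would then chain inequalities: the left-hand side equals $\frac{1}{T}\sum_t C(\xbf^{(t)})$, and since each $C(\xbf^{(t)}) \le \max_{\xbf}C(\xbf) = \rho_G^*$ (the game value from \eqref{eq:dual_maxmin} together with strong duality), it is at most $\rho_G^*$; choosing $\xbf^\star = \argmax_{\xbf}\xbf^T\Bbf\wb{\fbf}$ makes the averaged term on the right equal to $\gnorm{\Bbf\wb{\fbf}}{\infty}$. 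Rearranging yields $\gnorm{\Bbf\wb{\fbf}}{\infty} \le \frac{1}{1-\epsilon}\big(\rho_G^* + \frac{d_{\max}\ln n}{\epsilon T}\big)$.

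The main obstacle is converting this additive regret into the desired \emph{multiplicative} $(1+\epsilon)$ guarantee, because the width $d_{\max}$ can be far larger than $\rho_G^*$, so the additive error does not automatically vanish at the scale of $\rho_G^*$. I would resolve this by absorbing the width into the iteration count: taking $T = \Theta\!\big(\frac{d_{\max}\ln n}{\epsilon^2 \rho_G^*}\big)$ forces $\frac{d_{\max}\ln n}{\epsilon T}$ down to $O(\epsilon\rho_G^*)$, after which $\frac{1}{1-\epsilon}(1+O(\epsilon)) = 1+O(\epsilon)$ gives $\gnorm{\Bbf\wb{\fbf}}{\infty}\le (1+\epsilon)\rho_G^*$ after a constant rescaling of $\epsilon$. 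The two delicate points are using the gain (rather than purely additive) form of the MWU bound so that the leading factor is $(1-\epsilon)^{-1}$ rather than an unscaled slack, and verifying that normalizing over $\Delta_n^+$ versus the full simplex is harmless, since the nonnegativity of $\Bbf\fbf$ places every maximizer on the face $\one^T\xbf = 1$.
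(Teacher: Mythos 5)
Your proposal is correct and follows essentially the same route as the paper: the paper's Appendix~\ref{sec:appendix} runs exactly this MWU scheme on $\max_{\xbf\in\Delta_n^+}\min_{\fbf\in\PP}\xbf^T\Bbf\fbf$ with the oracle of Lemma~\ref{lem:inner_min} and width $W=\deg_{\max}$, invoking Theorem~3.1 of Arora--Hazan--Kale as a black box where you unroll the regret bound by hand. The extra details you supply --- returning the averaged iterate $\frac{1}{T}\sum_t \fbf(\xbf^{(t)})$ and checking it lies in the convex set $\PP$, and converting the additive regret to a multiplicative guarantee by choosing $T=\Theta(\deg_{\max}\log n/\eps^2)$ (using that $\rho_G^*\geq\nicefrac{1}{2}$ so the unknown $\rho_G^*$ need not appear in $T$) --- are points the paper leaves implicit, but they do not change the argument.
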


For the sake of completeness, we detail the MWU algorithm and the proof of Theorem~\ref{thm:mwu} in Appendix~\ref{sec:appendix}.

Let us now view Charikar's peeling algorithm in the context of this dual problem.
In a sense, the greedy peeling algorithm resembles one ``inner" iteration  of the MWU algorithm,
where whenever a vertex is removed, its edges assign their load to it. Keeping this in mind, we designed {\sc Greedy++}  to add ``outer" iterations
to the peeling algorithm, thus improving the approximation factor arbitrarily with increase in iteration count.
By weighting vertices using their load from previous iterations, \textsc{Greedy++} implicitly performs
a form of load balancing on the graph, thus arriving at a better dual solution.

\section{Experiments}
\label{sec:exp} 
\subsection{Experimental setup} 
\label{subsec:setup} 

\newcommand\Tstrut{\rule{0pt}{2.6ex}}         
\newcommand\Bstrut{\rule[-0.9ex]{0pt}{0pt}}   

\begin{table}[!ht]
\begin{center}
\begin{tabular}{|l|c|c|} \hline
Name  & $n$  & $m$  \Tstrut\Bstrut\\ \hline 
web-trackers \cite{kunegis2013konect} &	40\,421\,974	& 140\,613\,762 \Tstrut\\
orkut \cite{kunegis2013konect} &	3\,072\,441 &	117\,184\,899 \\ 
livejournal-affiliations \cite{kunegis2013konect} & 10\,690\,276 &	112\,307\,385 \\ 
wiki-topcats & 1\,791\,489 &	 25\,447\,873 \\
cit-Patents	& 3\,774\,768 &	16\,518\,948 \\
actor-collaborations \cite{kunegis2013konect} &	382\,219 &	15\,038\,083\\
ego-gplus	 & 107\,614 &	12\,238\,285 \\ 
dblp-author	& 5\,425\,963	& 8\,649\,016 \\ 
web-BerkStan	& 685\,230 & 	6\,649\,470 \\ 
flickr \cite{zafarani2009arizona} & 80\,513	& 5\,899\,882 \\ 
wiki-Talk &	2\,394\,385	& 4\,659\,565 \\ 
web-Google &	875\,713 &	4\,322\,051 \\ 
com-youtube  &	1\,134\,890	& 2\,987\,624 \\ 
roadNet-CA	& 1\,965\,206	& 2\,766\,607 \\ 
web-Stanford & 	281\,903 &	1\,992\,636 \\ 
roadNet-TX &	1\,379\,917 &	1\,921\,660 \\
roadNet-PA	& 1\,088\,092 &	1\,54\,898 \\ 
Ego-twitter &	81\,306	 & 1\,342\,296 \\
com-dblp &	317\,080	& 1\,049\,866\\
com-Amazon &	334\,863 &	925\,872 \\ 
soc-slashdot0902	& 82\,168 &	504\,230 \\
soc-slashdot0811	& 77\,360 &	469\,180 \\ 
soc-Epinions &	75\,879 &	405\,740 \\ 
blogcatalog	\cite{zafarani2009arizona} & 10,312 &	333\,983 \\
email-Enron &	36\,692 &	183\,831\\
ego-facebook & 	4\,039 &	88\,234 \\ 
ppi \cite{stark2006biogrid} &	3\,890 &	37\,845  \Bstrut\\ \hline
twitter-retweet \cite{sotiropoulos2019twittermancer}  &	316\,662   &	1\,122\,070 \Tstrut\\ 
twitter-favorite \cite{sotiropoulos2019twittermancer}  & 226\,516	& 1\,210\,041 \\
twitter-mention \cite{sotiropoulos2019twittermancer} &	571\,157   &	1\,895\,094 \\ 
twitter-reply   \cite{sotiropoulos2019twittermancer}   & 	196\,697	& 296\,194 \Bstrut\\  
\hline 
soc-sign-slashdot081106  &	77\,350 &	468\,554 \Tstrut\\
soc-sign-slashdot090216 &	81\,867 &	497\,672 \\ 
soc-sign-slashdot090221	& 82\,140	& 500\,481 \\
soc-sign-epinions	            & 131\,828	& 711\,210 \Bstrut\\ \hline
\end{tabular}
\end{center}
\caption{\label{tab:datasets} Datasets used in our experiments. }
\end{table}

The experiments were performed on a single machine, with an Intel(R) Core(TM) i7-2600 CPU at 3.40GHz (4 cores), 8MB cache size, and 8GB of main memory.  We find densest subgraphs on the samples using binary search and maximum flow computations. The flow computations were done using C++ implementations
of the push-relabel algorithm~\cite{goldberg1988new},
HiPR\footnote{HiPR is available at \url{http://www.avglab.com/andrew/soft/hipr.tar}}.
We have implemented our algorithm {\sc Greedy++} and Charikar's greedy algorithm C++. Our implementations are efficient and our code is available publicly\footnote{Our code for \textsc{Greedy++} and the exact algorithm is available at the anonymous link \url{https://www.dropbox.com/s/jzouo9fjoytyqg3/code-greedy\%2B\%2B.zip?dl=0}}.

\begin{figure*}[!ht]
\centering
\begin{tabular}{@{}c@{}@{\ }c@{}} \includegraphics[width=0.49\textwidth]{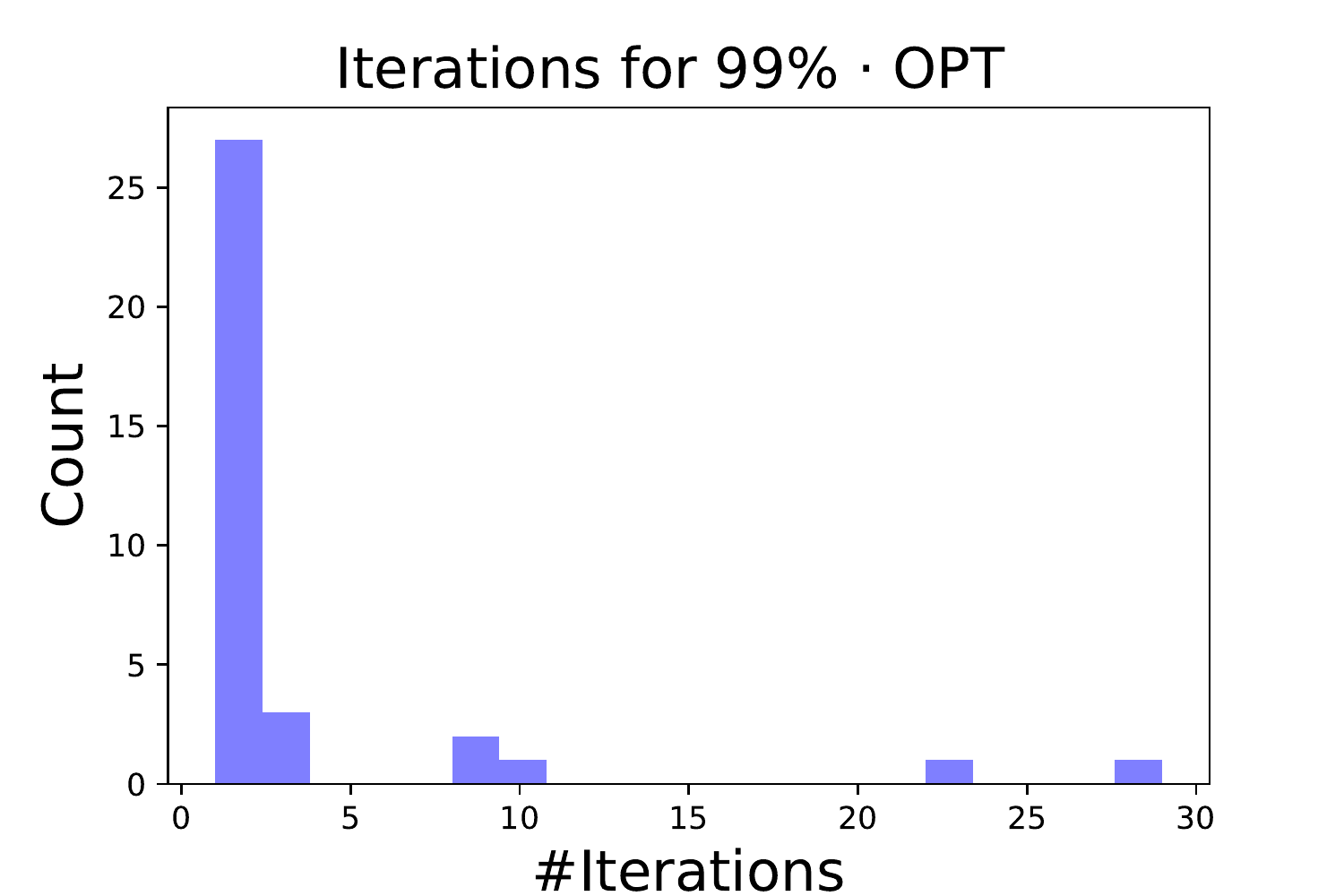} &
\includegraphics[width=0.49\textwidth]{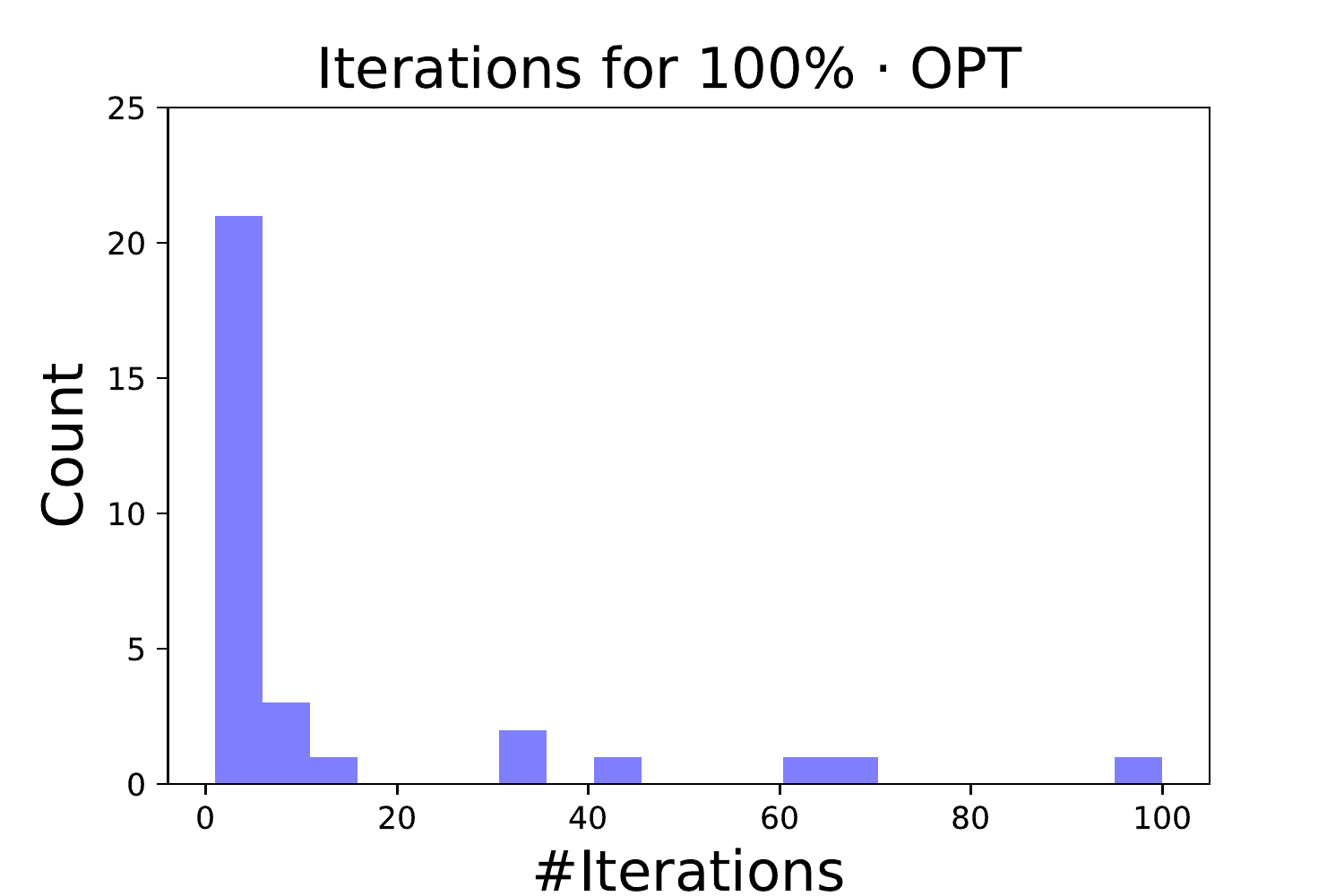} \hspace{5mm}  \\
(a) & (b) 
\end{tabular}
\caption{\label{fig:histograms} {\bf Number of iterations for {\sc Greedy++.}} Histograms of number of iterations to reach (a) 99\% of the optimum degree density, (b) the optimum degree density.}
\vspace{-4mm}
\end{figure*}

\begin{figure*}[!ht]
	\centering
	\begin{tabular}{@{}c@{}@{\ }c@{}} \includegraphics[width=0.49\textwidth]{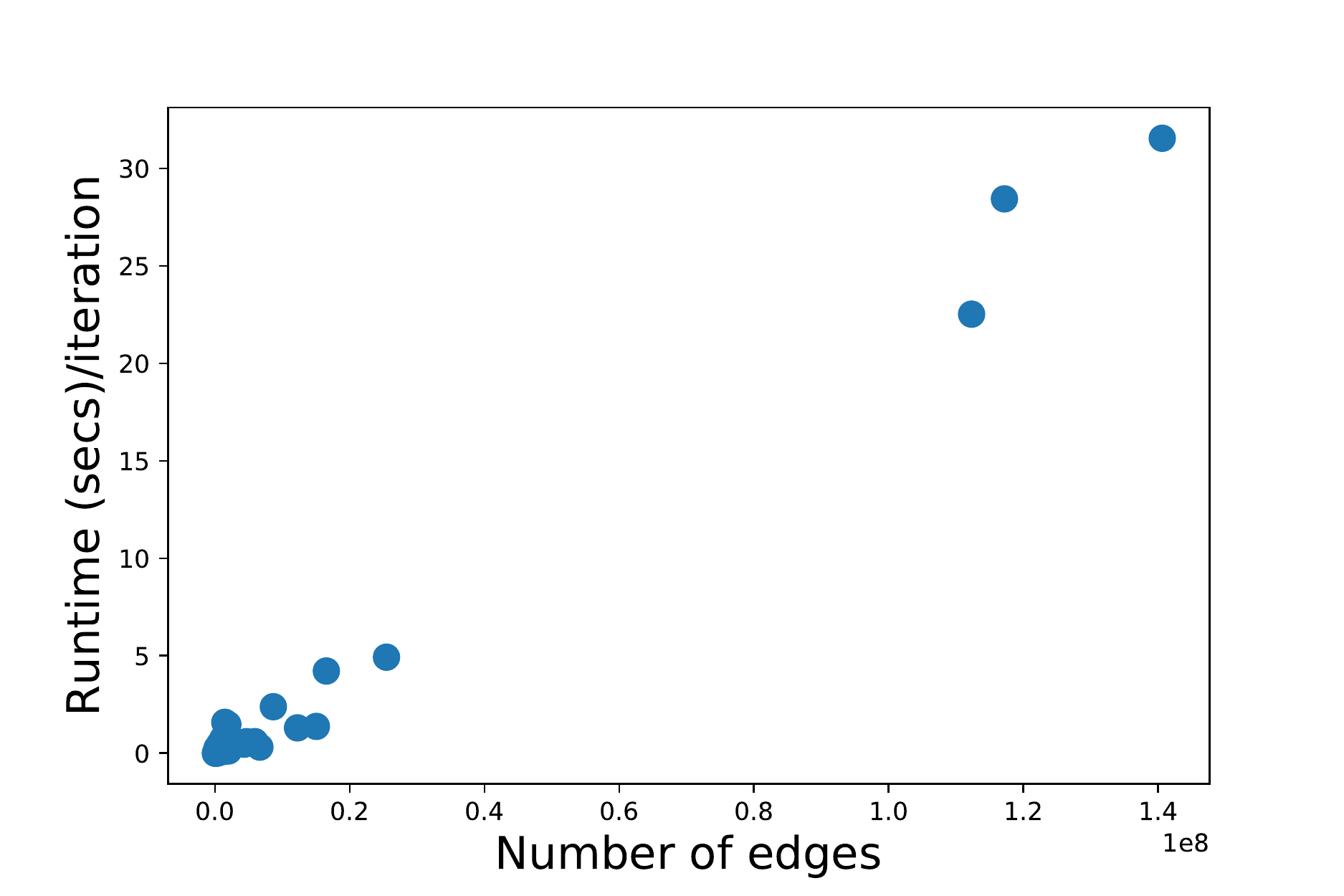} &
		\includegraphics[width=0.49\textwidth]{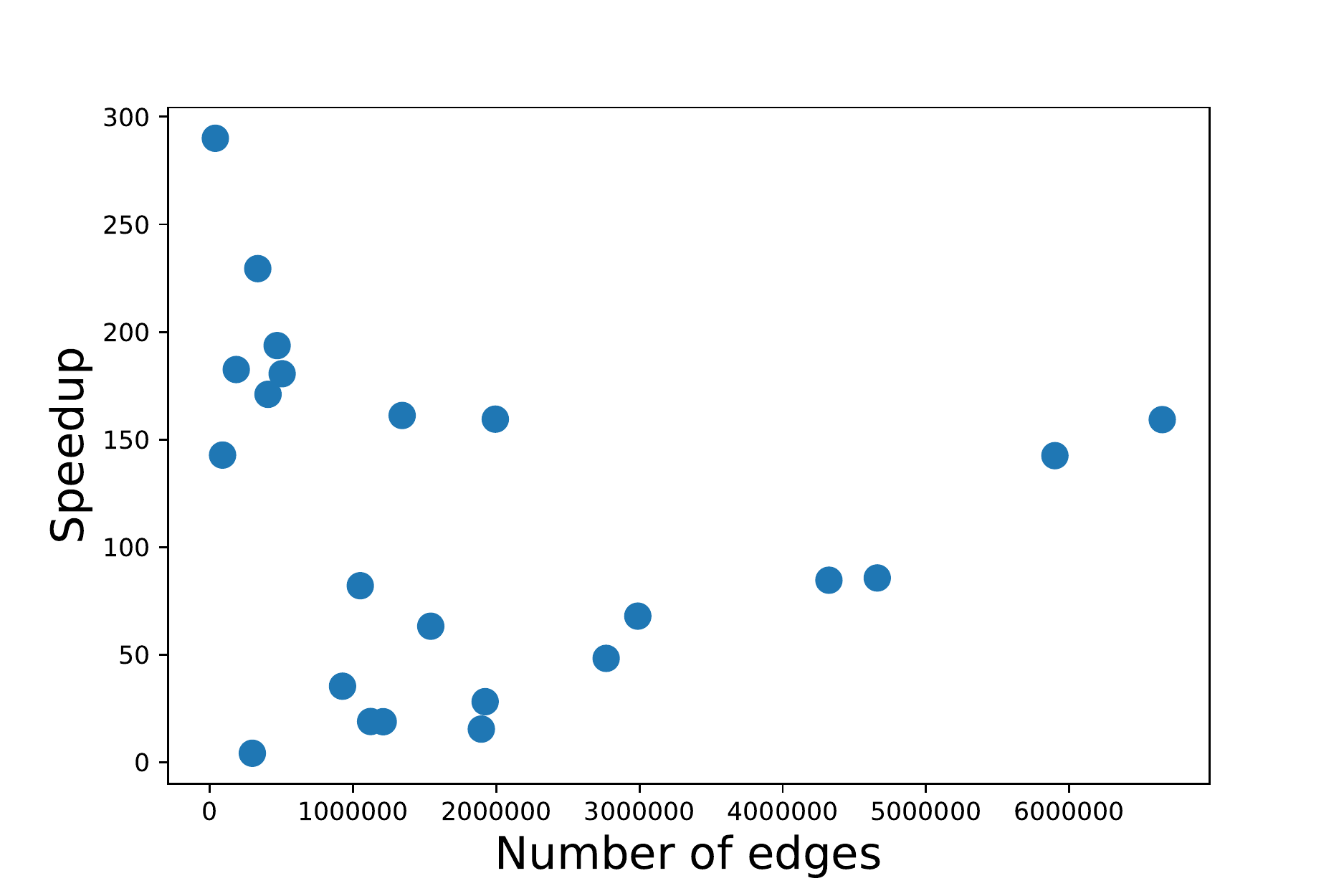} \hspace{5mm}  \\
		(a) & (b) 
	\end{tabular}
	\caption{\label{fig:scalability} {\bf Scalability.} (a) Running time in seconds of each iteration of {\sc Greedy++} versus the number of edges. (b) Speedup achieved by {\sc Greedy++} vs. number of edges in the graph. Specifically, the $y$-axis is the ratio of the run time of the exact max flow algorithm divided by the run time of {\sc Greedy++} that finds 90\% of the optimal solution. 
	}
	\vspace{-4mm}
\end{figure*}

We use a variety of datasets obtained from the Stanford's SNAP database \cite{leskovec2014snap}, ASU's Social Computing Data Repository \cite{zafarani2009arizona}, BioGRID \cite{stark2006biogrid} and from the Koblenz Network Collection \cite{kunegis2013konect}, that are shown in table Table~\ref{tab:datasets}. A majority of the datasets are from SNAP, and hence we mark only the rest with their sources. Multiple edges, self-loops are removed, and directionality is ignored for directed graphs. The first cluster of datasets are unweighted graphs. The  largest unweighted graph  is the \emph{web-trackers} graph with roughly 141M edges, while the smallest unweighted graph has roughly 25K edges. For weighted graphs, we use a set of Twitter graphs that were crawled during the first week of February 2018 \cite{sotiropoulos2019twittermancer}. Finally, we use a set of signed networks (\emph{slashdot}, \emph{epinions}). We remind the reader that while the DSP is NP-hard on signed graphs,  Charikar's algorithm does provide certain theoretical guarantees, see Theorem 2 in \cite{tsourakakis2019novel}.

\subsection{Experimental results} 
\label{subsec:findings} 

Before we delve in detail into our experimental findings, we summarize our key findings here: 
\begin{itemize}
\item Our algorithm {\sc Greedy++} when given enough number of iterations {\em always} finds the optimal value, and the densest subgraph. This agrees with our conjecture that running $T$ iterations of \textsc{Greedy++} gives a $1+O(\nicefrac{1}{\sqrt{T}})$ approximation to the DSP.  
\item Experimentally, Charikar's greedy algorithm  always achieves at least 80\% accuracy, and occasionally finds the optimum. 
\item For graphs on which the performance of Charikar's greedy algorithm is optimal, the first couple of iterations of {\sc Greedy++} suffice to deduce convergence safely, and thus act in practice as a certificate of optimality. This is the first method to the best of our knowledge that can be used to infer quickly the actual approximation of Charikar's algorithm on a given graph instance. 
\item When Charikar's algorithm does not yield an optimal solution, then {\sc Greedy++} within few iterations is able to increase the accuracy to 99\% of the optimum density, and by adding a few more iterations is able to find the optimal density and extract and optimal output. 
\item  When we are able to run the exact algorithm (for graphs with more than 8M edges, the maximum flow code crashes) on our machine, the average speedup that our algorithm provides to reach {\em the optimum} is 
144.6$\times$ on average, with a standard deviation equal to 57.4. The smallest speedup observed was 67.9$\times$, and the largest speedup 290$\times$.
Additionally, we remark that the exact algorithm is only able to find solutions up to an accuracy of $10^{-3}$ on most graphs.
\item The speedup typically increases as the size of the graph increases. In fact, the maximum flow exact algorithm cannot complete on the largest graphs we use.  
\item The maximum number of iterations needed to reach 90\% of the optimum is at most 3, i.e., by running two more passes compared to Charikar's algorithm, we are able to boost the accuracy by 10\%. 
\item The same remarks hold for both weighted and  unweighted graphs.
\end{itemize} 
 
%
%

 \begin{figure*}
         \centering
\begin{tabular}{@{}c@{}@{\ }c@{}}
\includegraphics[width=0.40\textwidth]{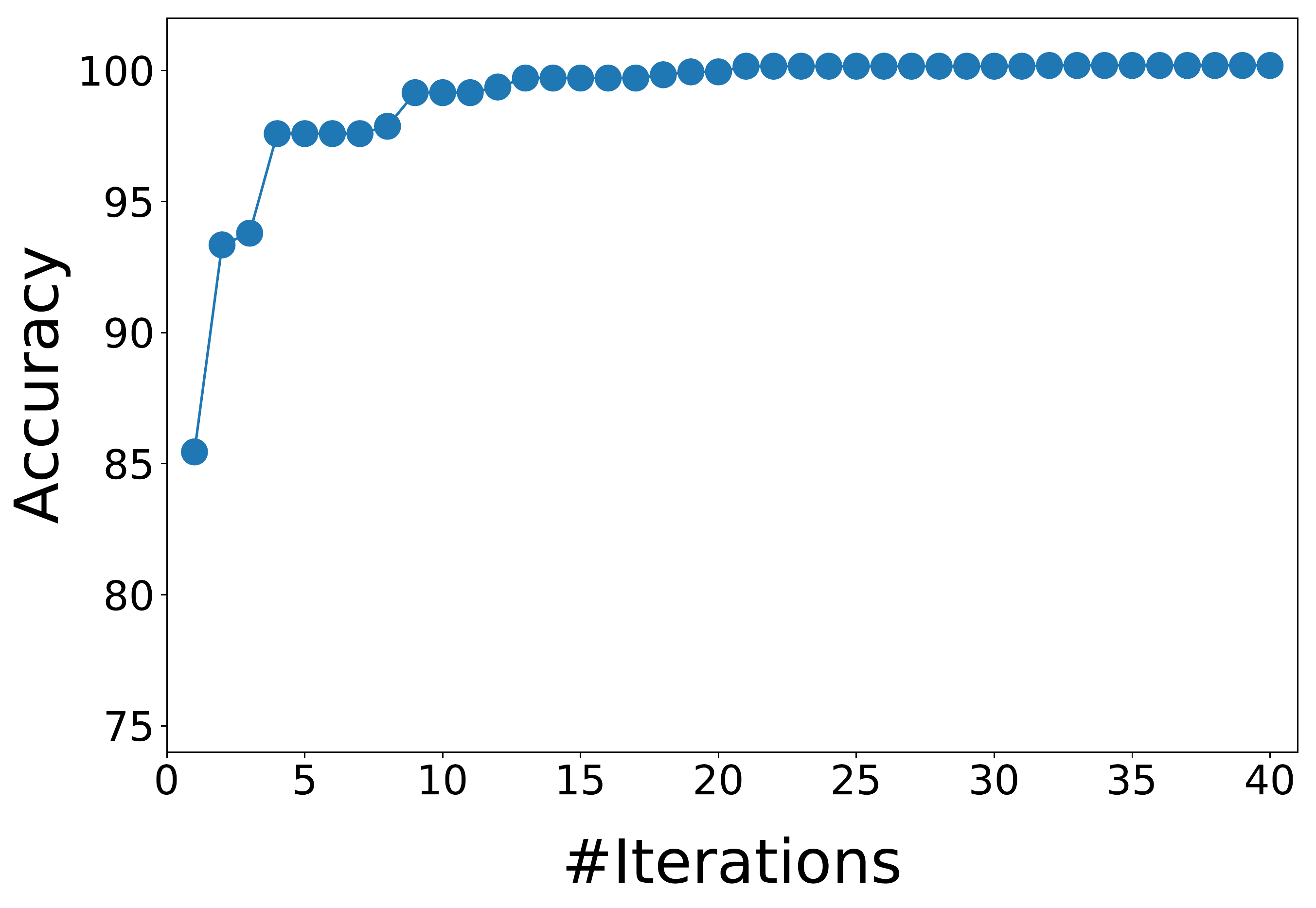}\hspace*{1cm} & \includegraphics[width=0.40\textwidth]{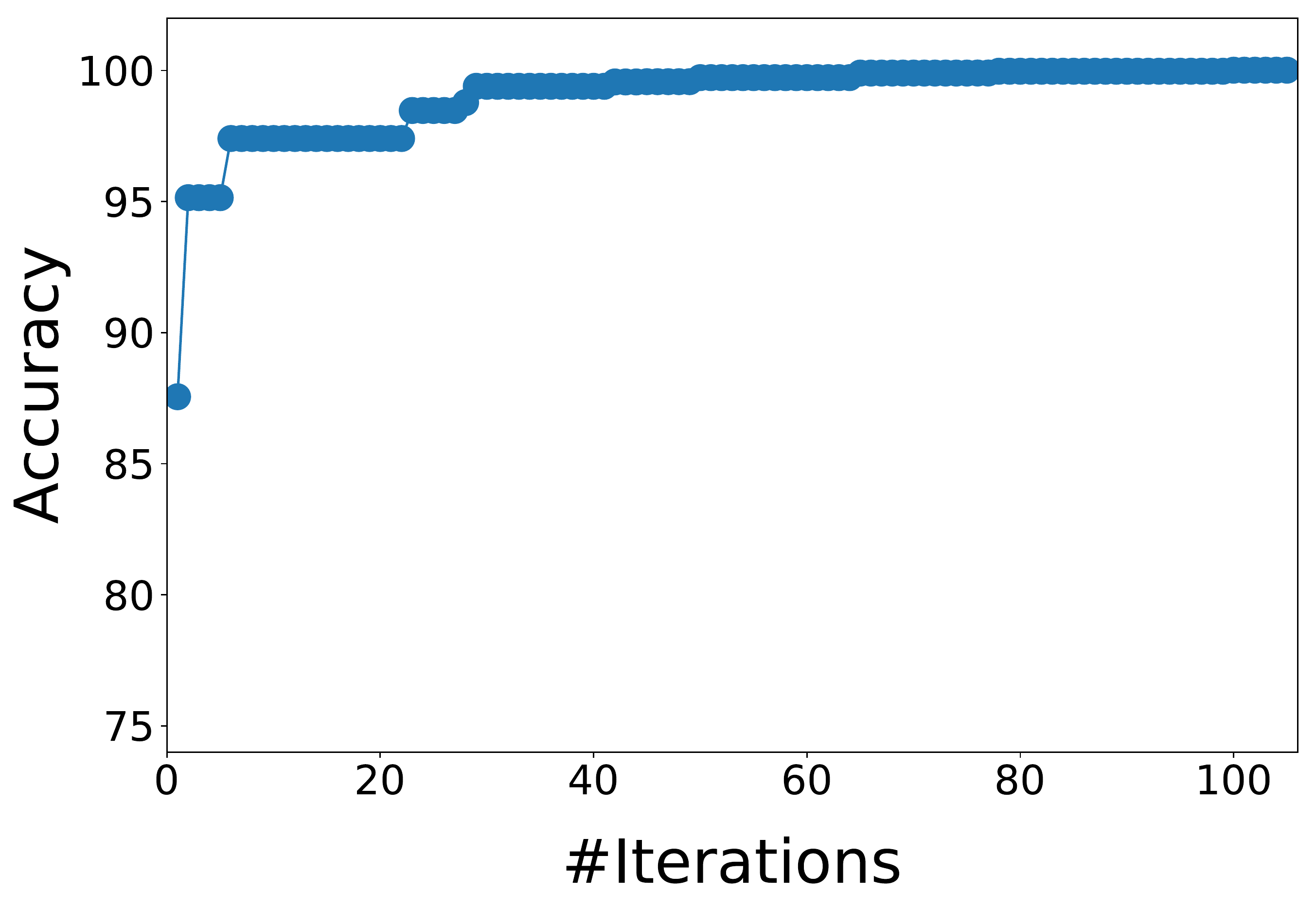} \\ 
(a) & \hspace*{1cm}(b)  \\ 
 \includegraphics[width=0.40\textwidth]{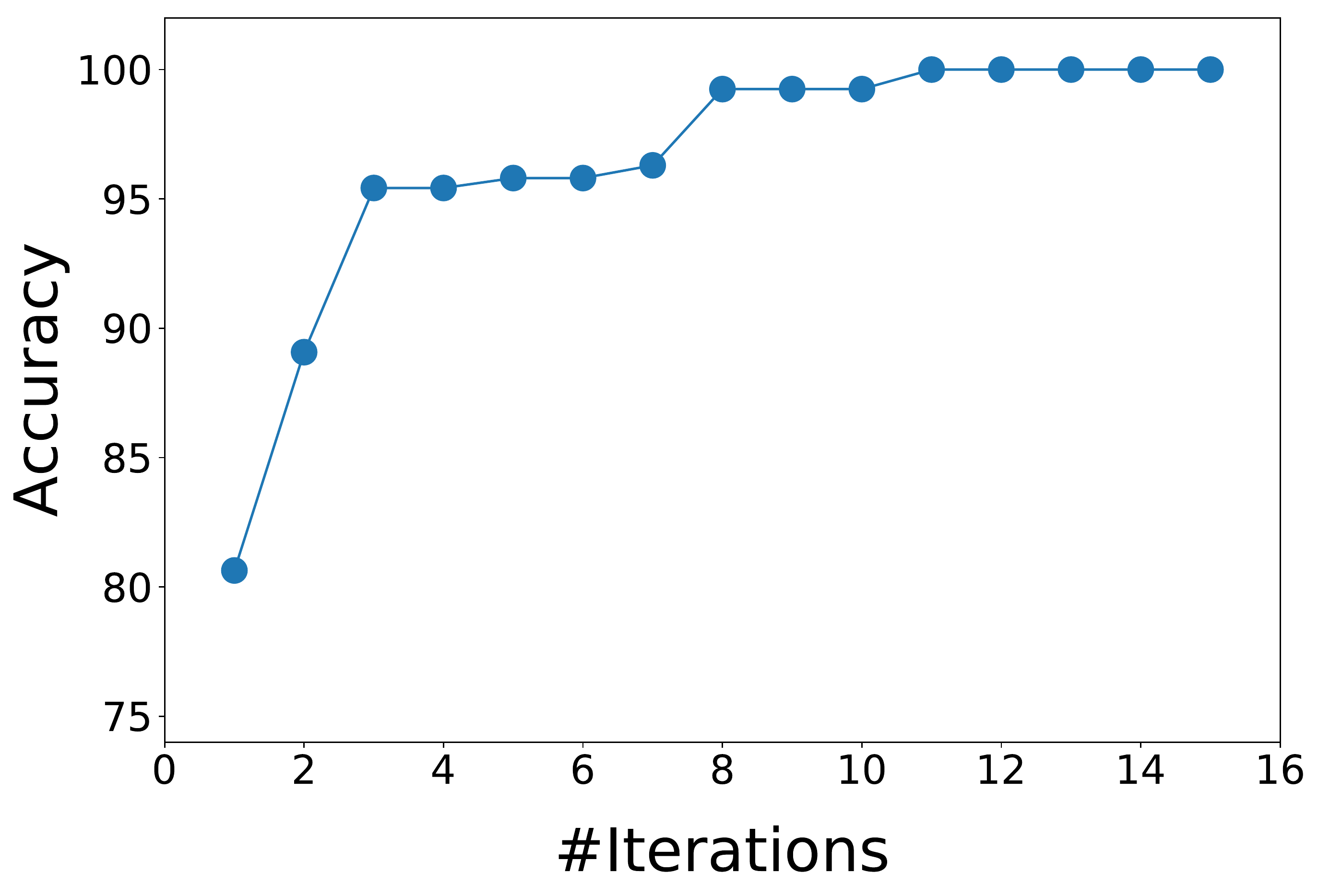}\hspace*{1cm} & \includegraphics[width=0.40\textwidth]{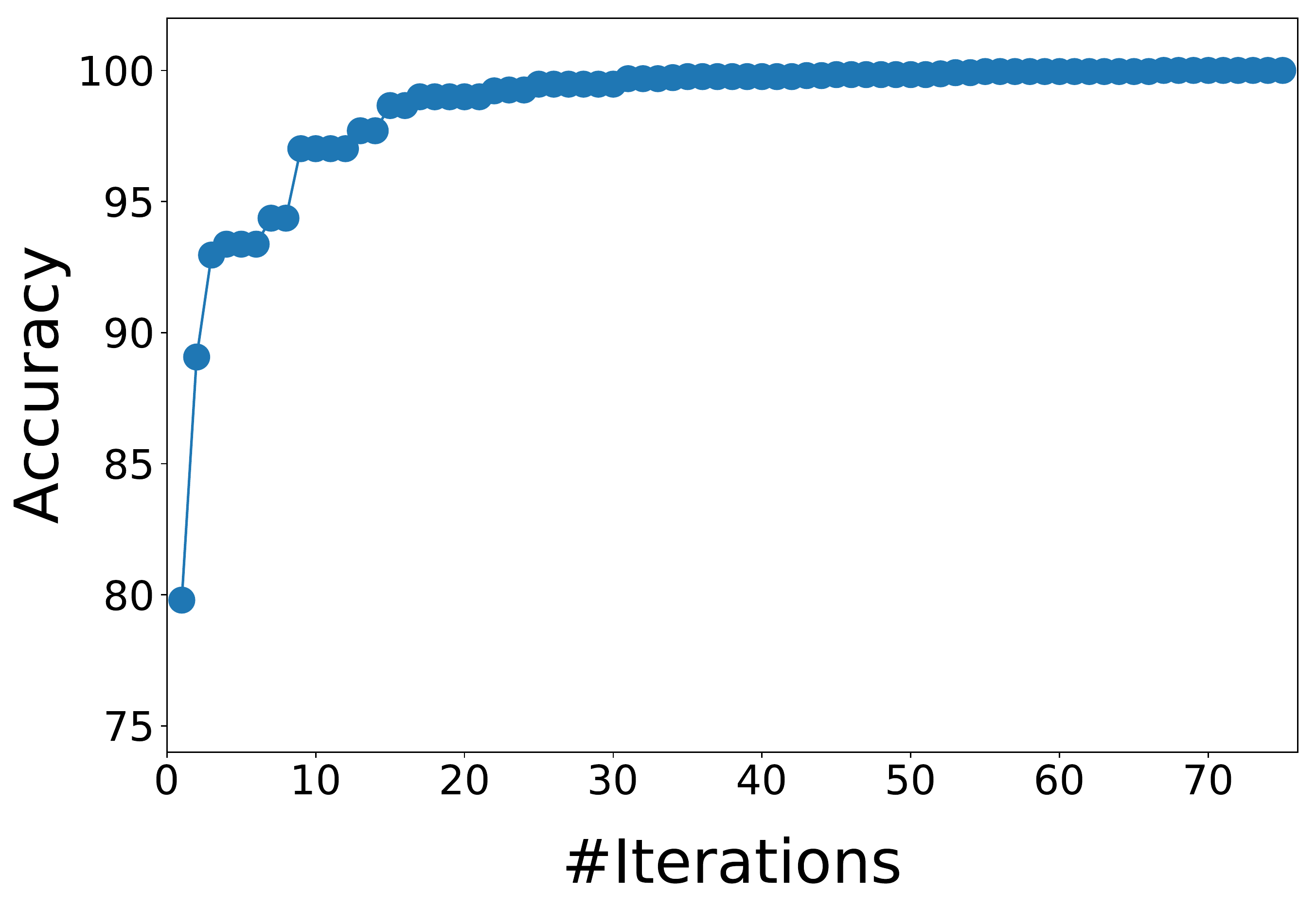} \\
(c) & \hspace*{1cm}(d) \\  
\includegraphics[width=0.40\textwidth]{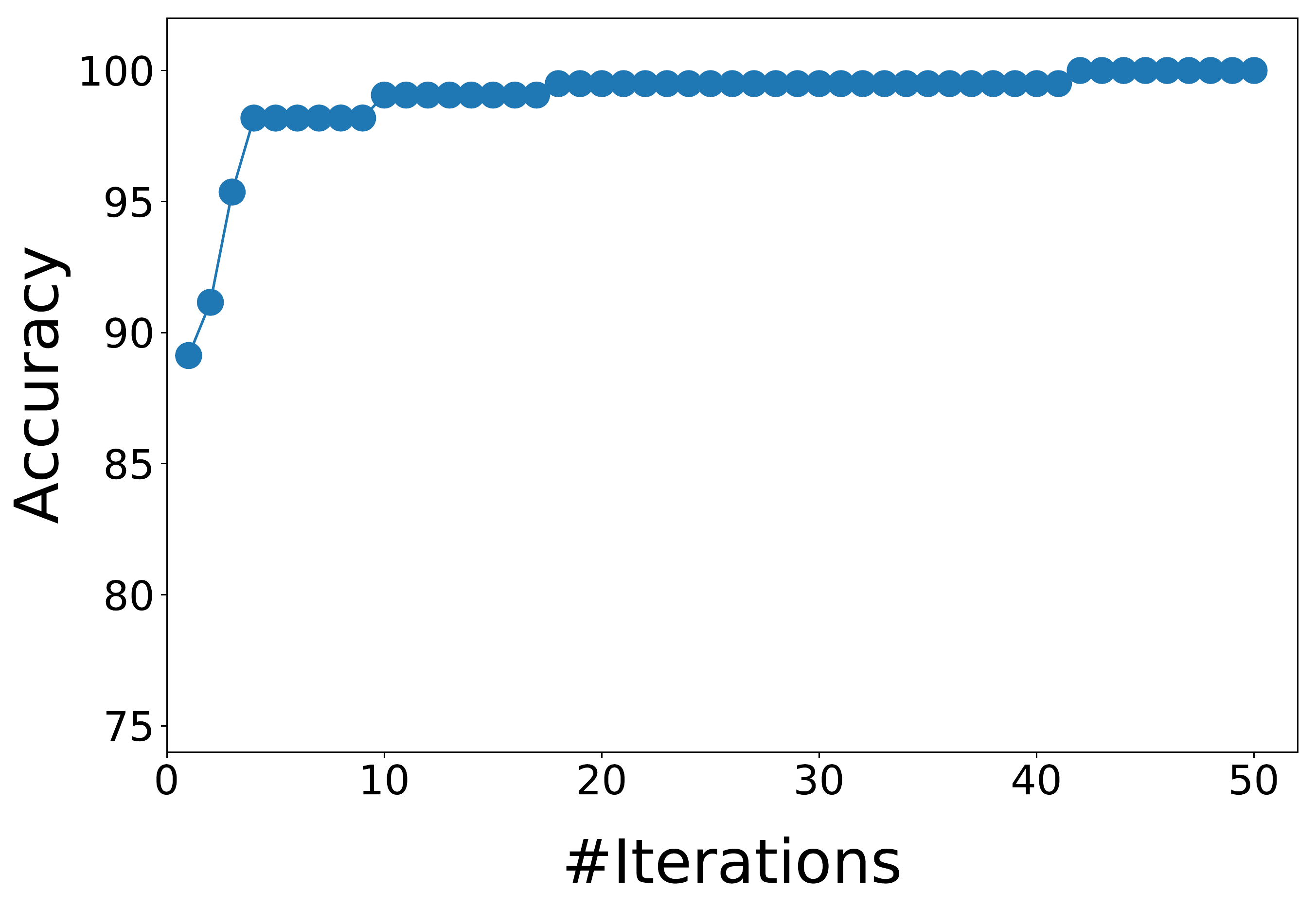}\hspace*{1cm} & \includegraphics[width=0.40\textwidth]{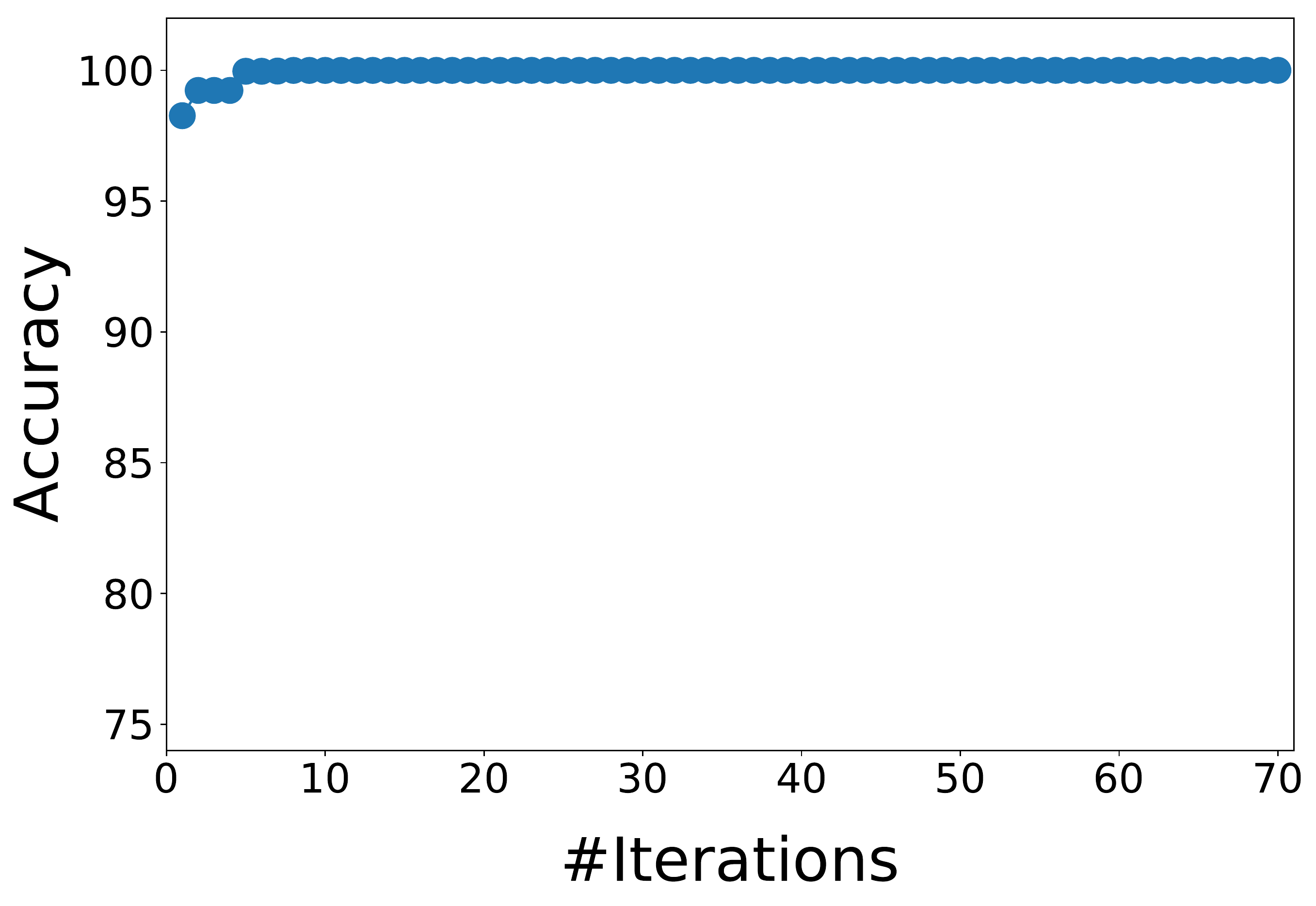} \\ 
(e) & \hspace*{1cm}(f) \\
 \includegraphics[width=0.40\textwidth]{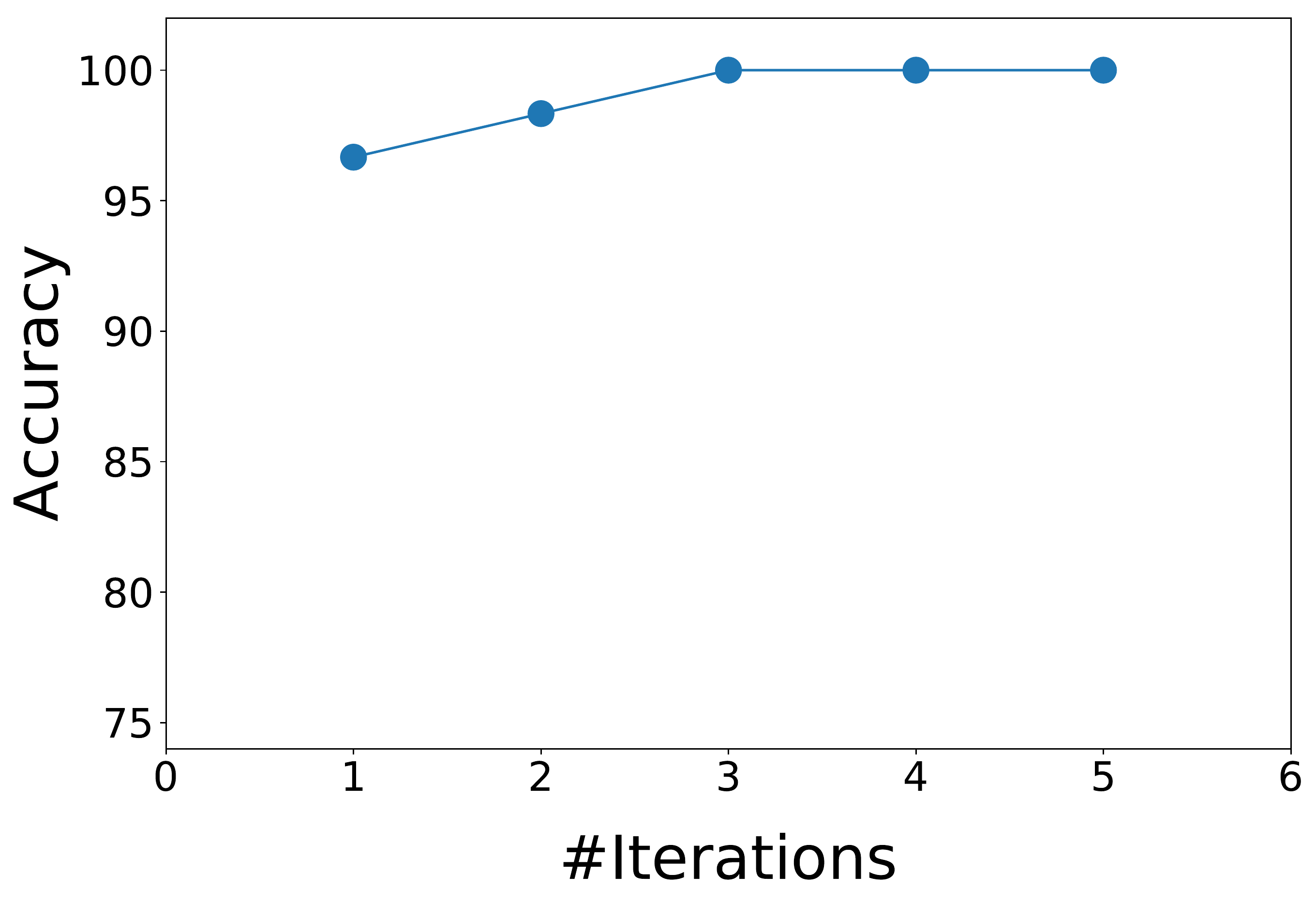}\hspace*{1cm} & \includegraphics[width=0.40\textwidth]{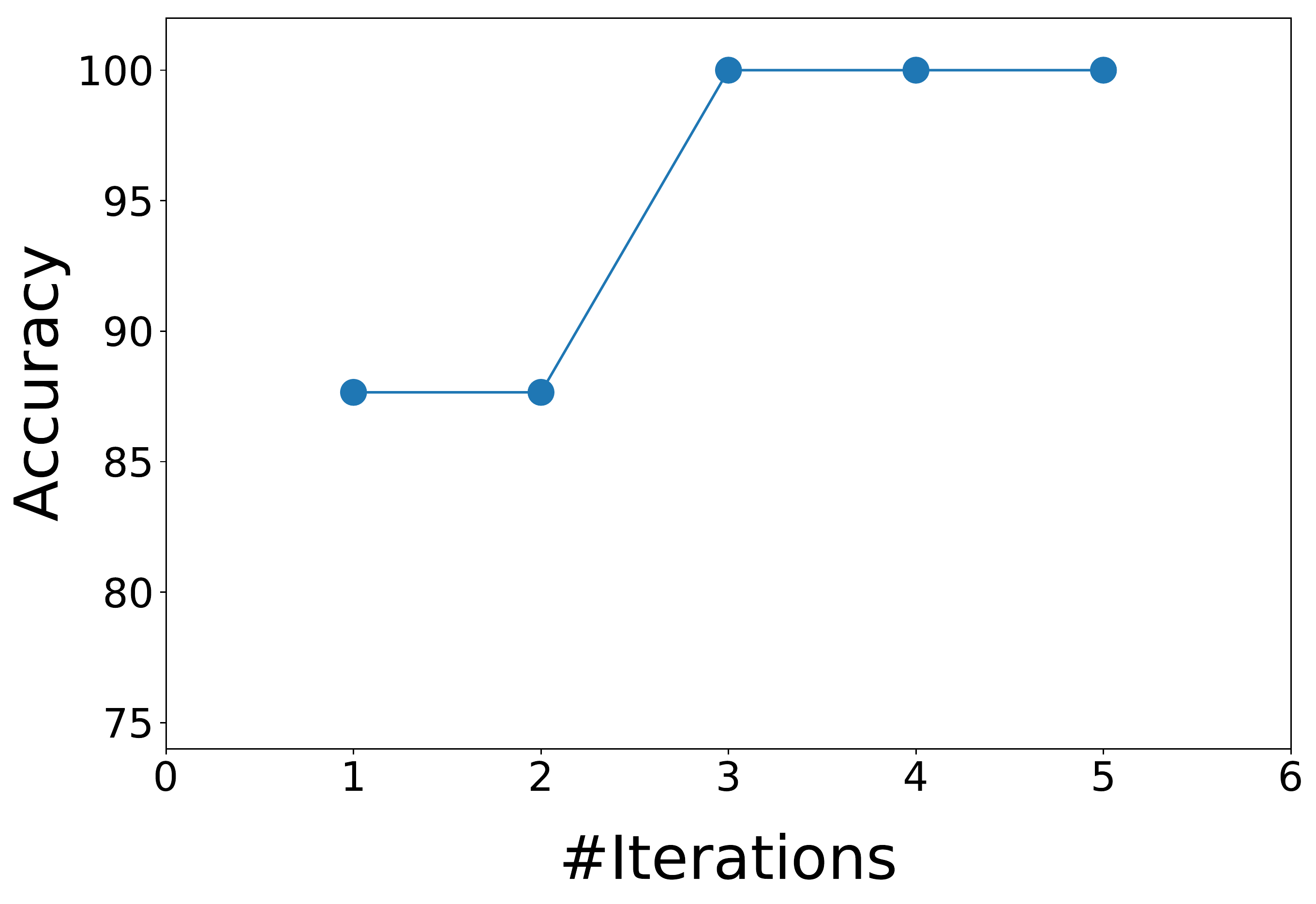} \\
(g) & \hspace*{1cm}(h)\\ 
\end{tabular}
\caption{\label{fig:convergence} Convergence to optimum as a function of the number of iterations of {\sc Greedy++}. (a) roadNet-CA, (b) roadNet-PA, (c) roadNet-TX, (d) com-Amazon, (e) dblp-author, (f) ego-twitter, (g) twitter-favorite, (h) twitter-reply.
Here, the \emph{accuracy} is given by $\bm{\frac{\rho(H_i)}{\rho_G^*}}$,
where $\bm{H_i}$ is the output of \textsc{Greedy++} after $\bm{i}$ iterations.}
\end{figure*}

\spara{Number of iterations.} We first measure how many iterations we need to reach 99\% of the optimum, or even the optimum. Figures~\ref{fig:histograms}(a),~(b) answer these questions respectively. We observe the impressive performance of Charikar's greedy algorithm; for the majority of the graph instances we observe that it finds a near-optimal densest subgraph. Nonetheless, even for those graph instances --as we have emphasized earlier-- our algorithm {\sc Greedy++} acts as a certificate of optimality. Namely, we observe that the objective remains the same after a couple of iterations if and only if the algorithm has reached the optimum. For the rest of the graphs where Charikar's greedy algorithm outputs an approximation  greater than 80\% but less than 99\%, we observe the following: for five datasets it takes at most 3 iterations, for one graph it takes nine iterations, and then there exist three graphs for which {\sc Greedy++} requires 10, 22, and 29 iterations respectively. If we insist on finding the optimum densest subgraph, we observe that the maximum number of iterations can go up to 100. On average, {\sc Greedy++} requires 12.69 iterations to reach the optimum densest subgraph.  

\spara{Scalability.}  Our experiments verify the intuitive facts that (i) each iteration of the greedy algorithm runs fast, and (ii) the exact algorithm that uses maximum flows is comparatively slow.  We constrain ourselves on the set of data for which we were able to run the exact algorithm. Figure~\ref{fig:scalability}(a) shows the time that each iteration of the {\sc Greedy++} takes on average (runtimes are well concentrated around the average) over the iterations performed to reach the optimal densest subgraph.    Figure~\ref{fig:scalability}(b) shows the speedup achieved by our algorithm when we condition on obtaining {\em at least}  90\% (notice that frequently the actual accuracy is greater than 95\%) of the optimal solution versus the exact max-flow based algorithm. Specifically, we plot  the ratio of the running times of the exact algorithm by the time of {\sc Greedy++} versus the number of edges. Notice that for small graphs, the speedups are very large, then they drop, and they exhibit an increasing trend as the graph size grows. For the largest graphs in our collection, the exact algorithm is infeasible to run on our machine.

\begin{figure}[!ht]
\centering
\includegraphics[width=0.45\textwidth]{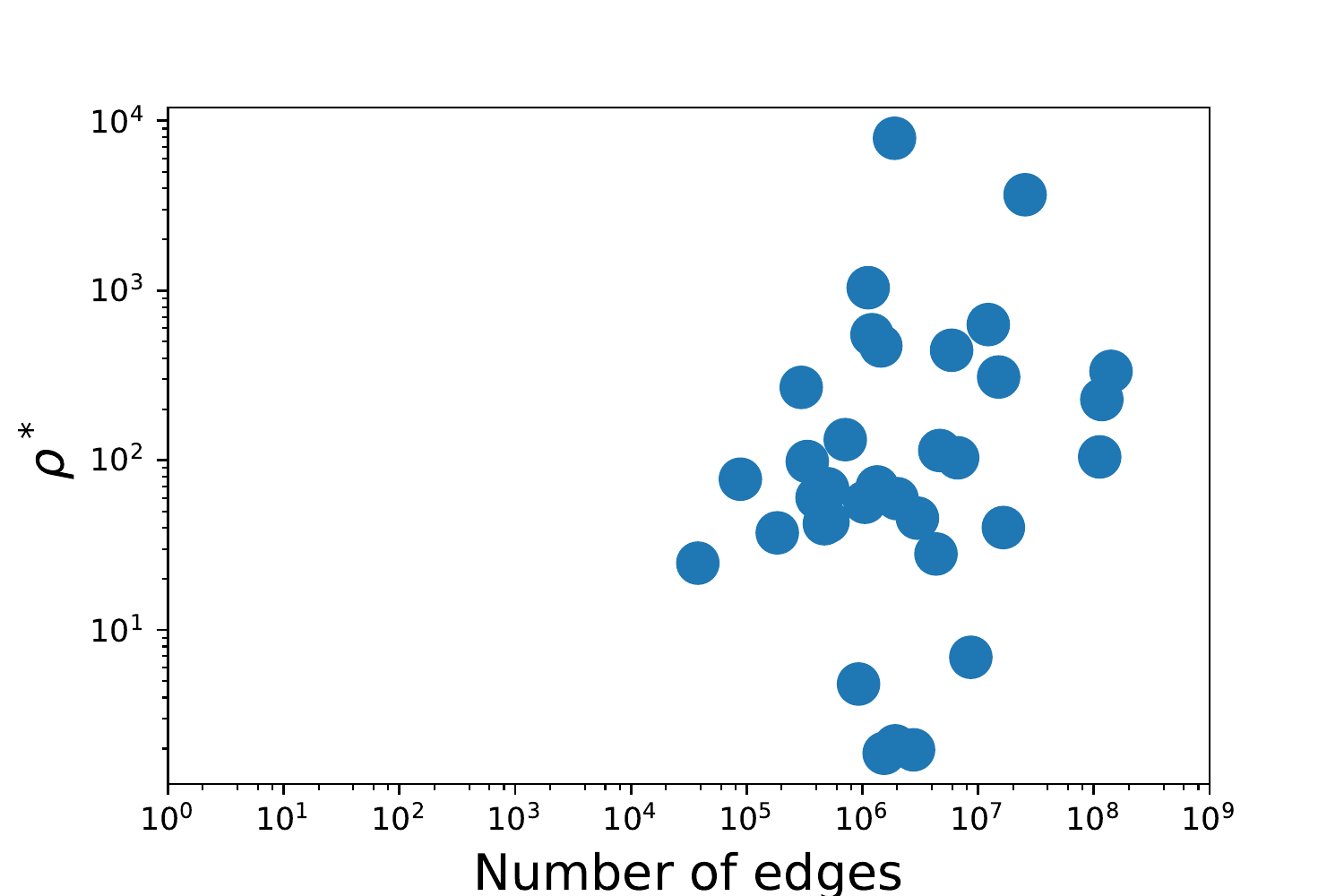}  
\caption{\label{fig:nopowerlaw} Log-log plot of optimal degree density $\rho^*$ versus the number of edges in the graph.  }
\end{figure}
 
\spara{Convergence.} Figure~\ref{fig:convergence} illustrates the convergence of {\sc Greedy++} for various datasets. Specifically, each figure plots the accuracy of {\sc Greedy++} after $T$ iterations versus $T$. The accuracy is measured as the ratio of the degree density achieved by {\sc Greedy++} by the optimal degree density. Figures~\ref{fig:convergence}(a),(b),(c),(d),(e),(f),(g),(h) correspond to the convergence behavior of roadNet-CA,   roadNet-PA, roadNet-TX,  com-Amazon,  dblp-author,   ego-twitter,   twitter-favorite, twitter-reply respectively. These plots illustrate various interesting properties of {\sc Greedy++} in practice. Observe Figure~\ref{fig:convergence}(e). Notice how {\sc Greedy++} keeps outputting the same subgraph for few consecutive iterations, but then suddenly around the 10th iteration it ``jumps'' and finds an even denser subgraph.  Recall that on average over our collection of datasets for which we can run the exact algorithm (i.e., datasets with less than 8M edges), {\sc Greedy++} requires roughly 12 iterations to reach the optimum densest subgraph.  For this reason we suggest running {\sc Greedy++} for that many iterations in practice. Furthermore, we typically observe an improvement over the first pass, with the exception of the weighted graph twitter-reply, where the ``jump'' happens at the end of the third iteration.

\spara{Anomaly detection.} It is worth outlining that {\sc Greedy++} provides a way to compute the densest subgraph in graphs where the maximum flow approach does not scale. For example, for graphs with more than 8 million edges, the exact method does not run on our machine. By running {\sc Greedy++} for enough iterations we can compute a near-optimal or the optimal solution. This allows us to compute a proxy of $\rho^*$ for the largest graphs, like orkut and trackers. We examined to what extent there exists a pattern between the size of the graph and the optimal density. In contrast to the power law relationship between the $k$-cores and the graph size claimed in \cite{shin2016corescope}, we do not observe a similar power law when we plot $\rho^*$ (the exact optimal value or the proxy value found by {\sc Greedy++} after 100 iterations for the largest graphs) versus the number of edges in the graph. This is shown in Figure~\ref{fig:nopowerlaw}. Part of the reason why we do not observe such a law are anomalies in graphs. For instance, we observe that small graphs may contain extremely dense subgraphs, thus resulting in significant outliers.

\section{Conclusion}
\label{sec:concl} 
In this paper we provide a powerful algorithm for the {\em densest subgraph problem}, a popular and important objective for discovering dense components in graphs. The main practical value of our {\sc Greedy++} algorithm is two-fold. First, by running few more iterations of Charikar's greedy algorithm we obtain (near-)optimal results that can be obtained using only maximum flows. Second, {\sc Greedy++} can be used to answer for first time the question ``Is the approximation of Charikar's algorithm on this graph instance closer to $\nicefrac{1}{2}$ or to $1$?'' without computing the optimal density using maximum flows. Empirically, we have verified that {\sc Greedy++} combines the best of ``two worlds'' on real data, i.e., the efficiency of the greedy peeling algorithm, and the accuracy of the exact maximum flow algorithm. 
We believe that {\sc Greedy++} is a valuable addition to the algorithmic toolbox for dense subgraph discovery that combines the best of two worlds, i.e., the accuracy of maximum flows, and the time and space efficiency of Charikar's greedy algorithm.  

We conclude our work with the following intriguing open question stated as a conjecture:

\begin{tcolorbox}
\begin{conjecture}
\label{conj} 
{\sc Greedy++} is a  $1+O(\nicefrac{1}{\sqrt{T}})$ approximation algorithm for the DSP, where $T$ is the number of iterations it performs. 
\end{conjecture}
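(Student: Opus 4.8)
The plan is to prove the conjecture through the convex-optimization lens developed in Section~\ref{subsec:mwu}, by exhibiting the accumulated loads $\ell^{(i)}$ of \textsc{Greedy++} as the iterates of a first-order method on a smooth surrogate of $\textsc{Dual}(G)$, and then invoking that method's $O(1/\sqrt{T})$ convergence rate. The $\ell_\infty$ objective $\gnorm{\Bbf\fbf}{\infty}$ appearing in \eqref{eq:dual_maxmin} is nonsmooth, so first I would replace it with the separable quadratic surrogate $\Phi(\fbf) \defeq \gnorm{\Bbf\fbf}{2}^2 = \sum_{v \in V} \ell_v^2$, minimized over the same polytope $\fbf \in \PP$. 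The first step is to show that the (unique) minimizer $\fbf^\star$ of $\Phi$ over $\PP$ certifies the optimum, i.e. $\gnorm{\Bbf\fbf^\star}{\infty} = \rho_G^*$; this should follow from the KKT conditions, which force the minimum-norm load vector to equalize loads across each block of the density-friendly decomposition of Tatti and Gionis \cite{tatti2015density} (see also Danisch et al. \cite{danisch2017large}), the top block carrying load exactly $\rho_G^*$.

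The second step is to identify \textsc{Greedy++} with a conditional-gradient (Frank--Wolfe) scheme for $\Phi$. The gradient of $\Phi$ at the running-average iterate $\wb{\fbf}^{(i-1)}$ is $2\wb{\ell}^{(i-1)}$, where $\wb{\ell}^{(i-1)} = \ell^{(i-1)}/(i-1)$ is the average load. The linear-minimization oracle over $\PP$ against this gradient decomposes per edge and assigns each edge's unit of load to its currently less-loaded endpoint, returning an extreme point of $\PP$; I would argue that the weighted peeling of pass $i$, which orders vertices by $\ell^{(i-1)}_u + \deg_H(u)$ and charges each removed vertex with its residual degree, produces precisely such an extreme point. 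Accumulating these $0/1$ orientations with uniform averaging is then exactly the Frank--Wolfe iterate with step size $1/i$. Given this identification, the third step is routine: standard analysis gives $\Phi(\wb{\fbf}^{(T)}) - \Phi(\fbf^\star) = O(1/T)$, and a strong-convexity argument for $\Phi$ restricted to the subspace spanned by the densest block converts this objective gap into a density gap, yielding $\gnorm{\Bbf\wb{\fbf}^{(T)}}{\infty} \le (1 + O(1/\sqrt{T}))\rho_G^*$, which is the claimed approximation after reading off $G_{\textsf{densest}}$ from $\wb{\fbf}^{(T)}$ via complementary slackness.

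The hard part will be the combinatorial identity in the second step: Charikar-style peeling is a \emph{sequential, global} procedure --- when a vertex is deleted its entire residual degree is charged at once --- so it is not literally the per-edge oracle that orients each edge toward its lower-loaded endpoint. I expect to need either a proof that the peeling output coincides with a valid Frank--Wolfe direction up to an error controlled by the width $d_{\max}$, or a reformulation of \textsc{Greedy++} via a \emph{fractional} peeling that matches the oracle exactly while provably preserving the integral algorithm's output. The second delicate point is the final translation: the quadratic surrogate controls $\gnorm{\cdot}{\infty}$ only through $\gnorm{\cdot}{2}$, and passing from an $O(1/T)$ objective gap to an $O(1/\sqrt{T})$ density gap requires a quantitative lower bound on the strong convexity (equivalently, the relevant spectral gap of $\Bbf$) of $\Phi$ along the densest block. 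Establishing that this constant behaves as a genuine structural quantity --- and that the resulting rate is neither sharper than conjectured nor polluted by a $d_{\max}$ factor inherited from the width --- is, I believe, the real crux, and is precisely the gap between the crude $2i\cdot\rho_G^*$ bound of the preceding lemma and the conjectured $1 + O(1/\sqrt{T})$ guarantee.
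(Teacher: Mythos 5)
You should first be aware that the statement you are proving is stated in the paper as an \emph{open conjecture} (Conjecture~\ref{conj}); the paper offers no proof of it, so your proposal cannot be measured against one and must stand entirely on its own. As written, it does not: it is a research plan whose two decisive steps are exactly the ones you yourself flag as ``the hard part,'' and the first of them is false as literally stated. The Frank--Wolfe linear-minimization oracle for $\Phi(\fbf)=\gnorm{\Bbf\fbf}{2}^2$ at the averaged iterate $\wb{\fbf}^{(i-1)}$ is a \emph{static, per-edge} rule: each edge $uv$ is oriented independently toward the endpoint with the smaller averaged load $\wb{\ell}^{(i-1)}$. Pass $i$ of \textsc{Greedy++} does something genuinely different: edge $uv$ is charged to whichever endpoint is deleted \emph{first}, and the deletion order is driven by the dynamically shrinking quantity $\ell^{(i-1)}_w+\deg_H(w)$, so a vertex with small previous load but many surviving neighbours may absorb all of its edges or almost none of them depending on global deletion order. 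Until you prove that this sequential peeling is an (approximate) oracle for the linearized objective --- with an explicit error bound --- the identification on which the whole argument rests is an assertion, not a lemma. (Your first step, that the minimizer of $\Phi$ over $\PP$ satisfies $\gnorm{\Bbf\fbf^\star}{\infty}=\rho_G^*$, is correct and is essentially the content of the quadratic-program characterization in Danisch et al.\ \cite{danisch2017large}; that part is fine.)

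The second gap is quantitative and would survive even a successful oracle identification. The Frank--Wolfe rate is $\Phi(\wb{\fbf}^{(T)})-\Phi(\fbf^\star)\le C_\Phi/T$ where the curvature constant $C_\Phi=\sup_{\fbf,\fbf'\in\PP}\gnorm{\Bbf(\fbf-\fbf')}{2}^2$ is, over the $0/1$ orientations your oracle returns, of order $\sum_v \deg(v)^2$. Since $\Phi$ is a squared norm of $\Bbf\fbf$, first-order optimality already gives $\gnorm{\Bbf\wb{\fbf}^{(T)}-\Bbf\fbf^\star}{2}^2\le \Phi(\wb{\fbf}^{(T)})-\Phi(\fbf^\star)$ with no strong-convexity machinery needed (the load vector $\Bbf\fbf^\star$ is unique even though $\fbf^\star$ need not be), so the bottleneck is not the modulus of convexity you worry about but the size of $C_\Phi$: the resulting bound $\gnorm{\Bbf\wb{\fbf}^{(T)}}{\infty}\le\rho_G^*+\sqrt{C_\Phi/T}$ is a $(1+O(1/\sqrt{T}))$-multiplicative guarantee only if $C_\Phi=O((\rho_G^*)^2)$, which fails badly on graphs with high-degree vertices and low density (e.g.\ a star). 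At best this route yields constants inside the $O(\cdot)$ depending on $\deg_{\max}$ and $\rho_G^*$ --- which may well be the true form of the theorem, but it is not established here. Finally, the closing step ``read off $G_{\textsf{densest}}$ via complementary slackness'' is not available at an approximate dual solution; you need a rounding argument showing that some level set of $\wb{\ell}^{(T)}$, and moreover one that actually appears as a prefix of one of the peeling orders \textsc{Greedy++} executes, has density $(1-O(\eps))\rho_G^*$. The overall strategy (quadratic surrogate, first-order method, dual loads) is a sensible and promising line of attack on the conjecture, but the proposal as submitted proves nothing beyond what Lemmas~\ref{lem1}--\ref{lem:inner_min} already give.
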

\end{tcolorbox}

\bibliographystyle{abbrv}
\bibliography{acmart}	

\begin{appendix}

\section{Multiplicative Weights Update Algorithm} \label{sec:appendix}
	In this section, we give an algorithm to solve the zero-sum game $\max_{\xbf\in \Delta_n} \min_{\fbf \in \PP} \xbf^T\Bbf\fbf$, which corresponds to solving the dual of
	the densest subgraph problem, as described in Section~\ref{subsec:mwu}.
	Given that we have an oracle access to $\min_{\fbf \in \PP}\xbf^T\Bbf \fbf$, we can use the multiplicative weights update framework to get an $\eps$-approximation of the game \cite{freund1996game}.
	
	The pseudocode for the MWU algorithm is shown in Algorithm~\ref{alg:mwu}.
	
	\begin{algorithm}[!ht]
		\caption{Multiplicative Weight Update Algorithm}
		\label{alg:mwu}
		\begin{flushleft}
			\textbf{Input}: Matrix $\Bbf$, approximation factor $\eps$. 
			
			\textbf{Output}:  An approximate solution to the zero-sum game.
		\end{flushleft}
		\begin{algorithmic}[1]
			\State Initialize the weight vector as $w_i^{(1)} \gets 1$ for all $i \in [n]$
			\State Initialize $\eta \gets \nicefrac{\eps}{2\deg_{\max}}$
			\For{$t : 1 \rightarrow T$}
			\State $x_i^{(t)} \gets {w_i^{(t)}}/{\gnorm{\wbf^{(t)}}{1}}$ for all $i \in [n]$.
			\State Find $\fbf(\xbf^{(t)})$ using Oracle($\xbf^{(t)}$).
			\State Set $C(\xbf^{(t)}) \gets (\xbf^{(t)})^T\Bbf\fbf(\xbf^{(t)})$ 
			\State Let $\bbf_i^T\fbf(\xbf^{(t)})$ be the $i$-th element in $\Bbf\xbf^{(t)}$.
			\State Update the weights as \[w_i^{(t+1)} \gets w_i^{(t)}(1+\eta\bbf_i^T\fbf(\xbf^{(t)})).\]
			\EndFor
			\State Return $\dfrac{1}{T}\sum_{t\in [T]}C(\xbf^{(t)})$ as the solution.
			\end{algorithmic}
	\end{algorithm}

	To prove the convergence of Algorithm~\ref{alg:mwu}, we use the following theorem from \cite{arora2012multiplicative}. We modify it slightly to accommodate for the fact that
	the width of the DSP, $||Bf(x)||_{\infty}$, can be at most $\deg_{\max}$.
	In other words, the oracle can assign at most $\deg_{\max}$ edges to any particular vertex.
	
	\begin{lemma}[Theorem 3.1 from \cite{arora2012multiplicative}]
		Given an error parameter $\eps$, there is an algorithm which solves the zero-sum
		game up to an additive factor of $\eps$ using $O(W \log n / \eps^2)$ calls to \textsc{Oracle},
		with an additional processing time of $O(n)$ per call, where $W$ is the width of the problem.
	\end{lemma}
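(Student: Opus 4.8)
The final statement is the standard multiplicative-weights (Hedge) regret guarantee for a zero-sum game solved by oracle best-response, specialized to width $W$; the plan is to reprove it by the classical potential-function argument tracking the total weight $\Phi^{(t)}=\gnorm{\wbf^{(t)}}{1}$. In the game at hand the $\xbf$-player runs MWU over the $n$ vertices and the $\fbf$-player best-responds through \textsc{Oracle}, so at round $t$ the ``gain'' fed to expert $i$ is $g_i^{(t)}=\bbf_i^T\fbf(\xbf^{(t)})\in[0,W]$, bounded by the width (for our application Lemma~\ref{lem:inner_min} shows $\fbf(\xbf)$ is $0/1$, giving $W\le\deg_{\max}$, which is exactly why $\eta=\eps/(2\deg_{\max})$). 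First I would establish the two-sided bound on the potential.

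For the upper bound, the update gives $\Phi^{(t+1)}=\sum_i w_i^{(t)}(1+\eta g_i^{(t)})=\Phi^{(t)}\bigl(1+\eta\langle\xbf^{(t)},g^{(t)}\rangle\bigr)$, so after telescoping and using $1+z\le e^{z}$,
\[
\Phi^{(T+1)}\le n\exp\Bigl(\eta\textstyle\sum_{t=1}^{T}\langle\xbf^{(t)},g^{(t)}\rangle\Bigr).
\]
For the lower bound, $\Phi^{(T+1)}\ge\max_i w_i^{(T+1)}=\max_i\prod_t(1+\eta g_i^{(t)})$, and since $\eta g_i^{(t)}\le\eta W=\eps/2\le1$ I can apply $\ln(1+z)\ge z-z^2$ to get, for every expert $i$,
\[
\ln\Phi^{(T+1)}\ge \eta\textstyle\sum_t g_i^{(t)}-\eta^2\sum_t (g_i^{(t)})^2\ge \eta\sum_t g_i^{(t)}-\eta^2 W\sum_t g_i^{(t)},
\]
where the last step uses $g_i^{(t)}\le W$.

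Combining the two bounds, dividing by $\eta T$, and recognizing $\tfrac1T\sum_t\langle\xbf^{(t)},g^{(t)}\rangle=\tfrac1T\sum_t C(\xbf^{(t)})$ (the returned value) and $\tfrac1T\sum_t g_i^{(t)}=\bbf_i^T\bar\fbf$ with $\bar\fbf=\tfrac1T\sum_t\fbf(\xbf^{(t)})$, I obtain, for every $i$,
\[
\tfrac1T\textstyle\sum_t C(\xbf^{(t)})\ \ge\ (1-\eta W)\,\bbf_i^T\bar\fbf-\tfrac{\ln n}{\eta T}.
\]
Taking the maximum over $i$ turns $\max_i\bbf_i^T\bar\fbf=\gnorm{\Bbf\bar\fbf}{\infty}=\max_{\xbf}\xbf^T\Bbf\bar\fbf$ into a lower bound on the game value $V^\ast$ (since $\bar\fbf\in\PP$, which is convex and hence contains the average of the oracle outputs). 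With $\eta=\eps/(2W)$ the multiplicative defect $\eta W\,V^\ast$ is $\tfrac\eps2 V^\ast$ and the residual $\tfrac{\ln n}{\eta T}$ drops below the same threshold once $T=O(W\log n/\eps^2)$, which, together with the $O(n)$ cost of each weight update, yields the claimed oracle and time complexity; $\tfrac1T\sum_tC(\xbf^{(t)})\le V^\ast$ holds trivially as an average of minima, so the returned value is within $\eps$ of $V^\ast$.

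The main obstacle is the bookkeeping that makes the round count \emph{linear} in the width: naively the second-order term $\eta^2\sum_t(g_i^{(t)})^2$ costs a factor $W^2$, and it is precisely the substitution $\sum_t(g_i^{(t)})^2\le W\sum_t g_i^{(t)}$ together with measuring the error relative to $V^\ast$ (the normalization implicit in the ``additive $\eps$'' statement) that collapses this to the $O(W\log n/\eps^2)$ bound -- this is the one place where the width modification for $W\le\deg_{\max}$ genuinely enters. Everything else (the telescoping identity, the $\max_i$-to-game-value step via feasibility of $\bar\fbf$, and reading off that this drives Theorem~\ref{thm:mwu} with game value $\rho_G^*$) is routine.
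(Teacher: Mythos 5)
The paper never actually proves this lemma: it is imported wholesale as Theorem~3.1 of \cite{arora2012multiplicative}, with only a one-sentence remark that the width may be taken to be $\deg_{\max}$ because the oracle's $0/1$ response assigns at most $\deg_{\max}$ edges to any vertex. Your proposal supplies a self-contained proof via the standard potential argument (track $\Phi^{(t)}=\gnorm{\wbf^{(t)}}{1}$, upper-bound it with $1+z\le e^{z}$, lower-bound a single expert's weight with $\ln(1+z)\ge z-z^{2}$), and it is correct. In particular you isolate exactly where the width modification enters --- the substitution $\sum_t(g_i^{(t)})^2\le W\sum_t g_i^{(t)}$, which keeps the round count linear rather than quadratic in $W$ --- and you correctly use convexity of $\PP$ to place the averaged oracle response $\bar\fbf$ in $\PP$, which both converts the per-expert regret bound into a bound against the game value and, as a byproduct, produces the explicit feasible dual witness that Theorem~\ref{thm:mwu} needs. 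The one loose joint is the final reading of the bound: what your computation actually delivers is $\tfrac1T\sum_t C(\xbf^{(t)})\ \ge\ (1-\tfrac{\eps}{2})V^{*}-\tfrac{2W\ln n}{\eps T}$, a mixed multiplicative/additive guarantee; calling this ``additive $\eps$'' after $T=O(W\log n/\eps^2)$ rounds presupposes the payoff normalization implicit in the cited theorem (values in $[0,1]$), whereas the paper's actual use of the lemma (its Corollary and Theorem~\ref{thm:mwu}) is as a $(1+\eps)$ \emph{relative} guarantee with $V^{*}=\rho_G^{*}$. Since you explicitly flag this normalization, I regard it as a presentational wrinkle rather than a gap; your argument is the proof the paper omits.
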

	
	Using the fact that our \textsc{Oracle} runs in $O(m)$ time (from Lemma~\ref{lem:inner_min}),
	and using $W = \deg_{\max}$, we get the following corollary.
	
	\begin{corollary}
	The Multiplicative Weight Update algorithm (Algorithm~\ref{alg:mwu}) outputs
	a $(1+\eps)$ approximate solution to the densest subgraph problem in time $O(m \deg_{\max} \log n / \eps^2)$.
	\end{corollary}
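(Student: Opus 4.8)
The plan is to obtain the corollary as a straightforward composition of the approximation guarantee already in hand and a per-iteration cost accounting for Algorithm~\ref{alg:mwu}. The approximation half is not re-proved here: by the strong-duality identity~\eqref{eq:dual_maxmin} the value of the zero-sum game $\max_{\xbf\in\Delta_n^+}\min_{\fbf\in\PP}\xbf^T\Bbf\fbf$ equals $\rho_G^*$, and Theorem~\ref{thm:mwu} already certifies that the $\fbf$ produced by the MWU scheme satisfies $\gnorm{\Bbf\fbf}{\infty}\le(1+\eps)\rho_G^*$. Thus the only remaining task is to multiply the number of iterations by the work done in each iteration.

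First I would fix the iteration count. Invoking the quoted width-dependent MWU bound with width $W=\deg_{\max}$ shows that $O(W\log n/\eps^2)=O(\deg_{\max}\log n/\eps^2)$ oracle calls suffice. The value $W=\deg_{\max}$ is justified exactly as in the remark preceding Theorem~\ref{thm:mwu}: Lemma~\ref{lem:inner_min} produces a $0/1$ vector $\fbf(\xbf)$, so each coordinate of $\Bbf\fbf(\xbf)$ counts edges assigned to a vertex and is bounded by its degree, giving $\gnorm{\Bbf\fbf(\xbf)}{\infty}\le\deg_{\max}$. Next I would bound the per-iteration cost: one oracle call costs $O(m)$ by Lemma~\ref{lem:inner_min}, while forming $C(\xbf^{(t)})$, extracting the entries $\bbf_i^T\fbf(\xbf^{(t)})$ of $\Bbf\xbf^{(t)}$, and applying the multiplicative updates $w_i^{(t+1)}\gets w_i^{(t)}(1+\eta\,\bbf_i^T\fbf(\xbf^{(t)}))$ together cost $O(n)$, matching the ``$O(n)$ additional processing per call'' in the quoted theorem. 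Assuming without loss of generality that $G$ has no isolated vertices so that $n=O(m)$, each iteration is $O(m)$, and multiplying by the iteration count yields total running time $O(m\,\deg_{\max}\log n/\eps^2)$, as claimed.

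The one step I would handle with care is the conversion from the additive guarantee in the quoted bound (game value correct to within an additive $\eps$) to the multiplicative $(1+\eps)$ statement of the corollary. This conversion is precisely what Theorem~\ref{thm:mwu} supplies, via the choice $\eta=\eps/(2\deg_{\max})$ set in Algorithm~\ref{alg:mwu} that folds the width normalization into the tolerance; I would therefore route the approximation claim entirely through Theorem~\ref{thm:mwu} rather than re-deriving it, so that no hidden dependence on $\rho_G^*$ leaks into the iteration count and the runtime reduces to the clean product above. With that reconciliation delegated, the remainder is pure bookkeeping.
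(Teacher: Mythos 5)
Your proposal is correct and follows essentially the same route as the paper: invoke the quoted width-dependent MWU bound with $W=\deg_{\max}$ to get $O(\deg_{\max}\log n/\eps^2)$ oracle calls, charge $O(m)$ per call via Lemma~\ref{lem:inner_min}, and multiply. The paper's own justification is a single sentence doing exactly this; your version merely spells out the per-iteration bookkeeping and flags the additive-to-multiplicative conversion (which the paper silently delegates to Theorem~\ref{thm:mwu}, as you do).
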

	
\end{appendix}

\end{document}